\DeclareMathOperator{\bd}{bd}
\newlang{\Parity}{Parity}
\newlang{\BGGM}{BGGM}
\newclass{\quasiNC}{quasiNC}
\newclass{\Lhard}{L-hard}
\newclass{\BWBP}{BWBP}
\DeclareMathOperator{\AND}{AND}
\DeclareMathOperator{\NOT}{NOT}
\DeclareMathOperator{\OR}{OR}
\DeclareMathOperator{\XOR}{XOR}
\newtheorem{theorem}{Theorem}
\newtheorem{lemma}{Theorem}
\title{Circuit Complexity of Bounded Planar Cutwidth Graph Matching}
\author[1]{Aayush Ojha}
\affil[1]{Indian Institute of Technology, Kanpur \authorcr
  Email: aayushoj@cse.iitk.ac.in}
\author[2]{Raghunath Tewari}
\affil[2]{Indian Institute of Technology, Kanpur \authorcr
  Email: rtewari@cse.iitk.ac.in}
\begin{document}

\maketitle

\begin{abstract}

Recently, perfect matching in bounded planar cutwidth bipartite graphs (\BGGM) was shown to be in $\ACC^0$ by Hansen et al. \cite{Hansen2014}. They also conjectured that the problem is in $\AC^0$. 

In this paper, we disprove their conjecture by showing that the problem is not in $\AC^0[p^{\alpha}]$ for every prime $p$. Our results show that the previous upper bound is almost tight. Our techniques involve giving a reduction from  {\Parity} to {\BGGM}.  A further improvement in lower bounds is difficult since we do not have an algebraic characterization for $\AC^0[m]$ where $m$ is not a prime power. Moreover, this will also imply a separation of $\AC^0[m]$ from $\P$. Our results also imply a better lower bound for perfect matching in general bounded planar cutwidth graphs. 

 \end{abstract}

\section{Introduction}
For a graph $G=(V,E)$ a {\em matching} $M \subseteq E$ is a set of edges in $G$ such that no two edges in $M$  share a common vertex. We say $G$ has a {\em perfect matching} if there exists a matching that matches every vertex in $G$. Since every graph is not guaranteed to have a perfect matching, computing a matching of maximum cardinality is a natural generalization of the perfect matching problem. The computational complexity of the matching problem is a well-studied problem particularly in the context of circuit complexity and derandomization. 

In 1965, Edmonds showed that computing maximum matching is in {\P} \cite{Edmonds65}. In 1979, Lov\'{a}sz gave an efficient randomized parallel algorithm for the perfect matching problem by showing that it is in {\RNC} \cite{Lovasz79}. The construction version of the problem was also shown to be in {\RNC} \cite{MatchingRandomNC, MulmuleyRandomizedNC}. It is an important open question whether matching has an efficient deterministic parallel
algorithm, that is, whether it is in {\NC}. Attempts to derandomize the above approaches has proved elusive so far. Recently there has been some progress on this problem. Perfect matching was shown to be in {\quasiNC} for bipartite graphs \cite{BipartiteQuasiNC} and in a subsequent paper extended to general graphs \cite{GeneralQuasiNC}.


Stronger results are known for perfect matching in graphs with bounded treewidth and its subclasses. Elberfeld et al. \cite{Tantau-LogSpace} showed that the problem is in {\L} for graphs with bounded treewidth by proving the logspace versions of Bodlaender's and Courcelle's theorem. This gives a tight bound on the complexity of perfect matching in bounded treewidth graphs since it was already known to be {\L}-hard \cite{TreeWidthMatching-L-complete}. In a subsequent paper, Elberfeld et al. showed that given a tree decomposition of the input graph as a term representation, perfect matching for bounded treewidth graphs is in uniform $\NC^1$ and for bounded tree-depth graphs is in uniform $\AC^0$ \cite{Tantau-NC1}. The upper bound of ${\NC}^1$ for bounded treewidth graphs is tight since Barrington showed that the problem is hard for ${\NC}^1$ under projection reductions \cite{Barrington-NUDFA}. 

In 2014, Hansen et al. used the characterization of Barrington and Th\'{e}rien \cite{Barrington-Therien} and showed that bipartite perfect matching in graphs with bounded planar cutwidth is in ${\ACC}^0$ \cite{Hansen2014}. They also gave a lower bound of ${\AC}^0$ for the same problem. For perfect matching in general bounded planar cutwidth graphs, they gave a lower bound of $\AND \circ \OR \circ \XOR \circ {\AC}^0$. In their paper, Hansen et al. also conjectured that perfect matching for bipartite bounded planar cutwidth graphs is in ${\AC}^0$ and for general bounded planar cutwidth graphs is in ${\AC}^0[2]$. 

\subsection{Our Result and Proof Outline}
We refute both the conjectures in this paper by giving improved lower bounds for perfect matching in bounded planar cutwidth graphs for both bipartite and general graphs. We show that perfect matching for bounded planar cutwidth graph is not in ${\AC}^{0}[p^\alpha]$ for every prime $p$ and $\alpha \in \mathbb{N}$. 

To show this improved lower bound we first reduce {\Parity} to perfect matching in bipartite bounded planar cutwidth graphs using a family of $\AC^0$ circuits. This is done by constructing certain graph gadgets as defined in Section \ref{sec:bggm2parity}. This reduction and result by Razborov \cite{Raz89} and Smolensky \cite{Smo87} shows the problem is not in ${\AC}^{0}[p^\alpha]$ for odd prime $p$. To extend the result for the case when $p=2$ we use the monoid word reduction of matching in bipartite bounded planar cutwidth graphs provided by Hansen et al. \cite{Hansen2014}. Using this reduction we show that $\Mod_q$ can be reduced to perfect matching in bipartite bounded planar cutwidth graphs for some odd prime $q$. This shows that perfect matching for bipartite bounded planar cutwidth graphs is not in ${\AC}^{0}[2^\alpha]$ as well. We also show similar lower bound for series-parallel graphs. An upper bound of ${\NC}^{1}$ for perfect matching in series-parallel graphs follows from the result of \cite{Tantau-NC1}.

\subsection{Organization of the Paper}
The rest of the paper is organized as follows. In Section \ref{sec:prelim} we will cover the preliminaries and notations that we will be using throughout the paper. We also discuss the work of Th\'{e}rien and Barrington \cite{Barrington-Therien} and results from Hansen et al. \cite{Hansen2014}. In Section \ref{sec:bggm2parity} we show the reduction of {\Parity} to perfect matching in bipartite bounded planar cutwidth graphs. In Section \ref{sec:bggmlowerbound} we first discuss the reduction framework of {\BGGM} to the monoid word problem due to Hansen et al. \cite{Hansen2014}. We then use this framework to show that perfect matching in bipartite bounded planar cutwidth graphs in not in ${\AC}^{0}[2^\alpha]$.  In Section \ref{sec:appl} we discuss an application of our result to perfect matching in series-parallel graphs. We also discuss possible limitations of our approach and future directions.

\section{Preliminaries}
\label{sec:prelim}
In this section, we give the required definitions and notations that we use in this paper. We also state the results from previous work that we use in our paper.

\subsection{Definitions and Notations}
Circuits are a non-uniform model of computation where size and depth of the circuit are two common resources that are usually studied. Additionally, type of gates used in the circuit and fan-in (indegree of a gate) are also often considered. ${\AC}^0$ is the class of problems having a family of circuits that have the constant depth, polynomial size and unbounded fan-in $\AND$, $\OR$ and $\NOT$ gates. $\AC^0[m]$ is an extension of $\AC^0$ where the circuits are allowed to have $\Mod_m$ gates in addition to $\AND$, $\OR$ and $\NOT$ gates. $\ACC^0$ is an extension of $\AC^0[m]$ where circuits are allowed to have $\Mod_m$ gates for any $m \in \mathbb{N}$. ${\NC}^1$ is the class of problems having a family of circuits that have logarithmic depth, polynomial size and bounded fan-in $\AND$, $\OR$ and $\NOT$ gates. It is easy to see that ${\AC}^0 \subseteq {\AC}^{0}[m] \subseteq {\ACC}^0 \subseteq {\NC}^1$. In fact, the first containment is proper. The reader can refer to the book by Vollmer for more details about these classes and circuit complexity in general \cite{Vol}. Let $N_a(x)$ be number of times symbol $a$ appears in string $x$. We also consider the language $\Parity = \{x \in \{0,1\}^* \mid N_1(x) \not\equiv 0 \mod 2\}$ and its generalization $\Mod_p = \{x \in \{0,1\}^* \mid N_1(x) \not\equiv 0 \mod p\}$ for any $p \geq 2$, for proving our lower bounds.

A {\em monoid} $\mathcal{M}$ is a set $S$ along with a binary operator $\oplus$ such that (i) for all $s_1,s_2 \in S$, $s_1 \oplus s_2 \in S$ (closure property), (ii) for all $s_1,s_2, s_3 \in S$ we have $s_1 \oplus (s_2 \oplus s_3) = (s_1 \oplus s_2) \oplus s_3$ (associativity property) and (iii) there exists $e \in S$ such that for all $s \in S$ we have $e \oplus s = s = s \oplus e$ (existence of identity). A subset $\mathcal{G}$ of $\mathcal{M}$ is a group in $\mathcal{M}$ if $\mathcal{G}$ is a group with respect to the operation of $\mathcal{M}$. If every group in a monoid is trivial then the monoid is said to be an {\em aperiodic monoid}. If every group in a monoid is solvable then the monoid is said to be a {\em solvable monoid}. 
For a monoid $\mathcal{M}$, the {\em monoid word problem} is given $x_1, x_2, \ldots , x_n \in \mathcal{M}$ as input, to compute $x_1\oplus x_2 \oplus \ldots \oplus x_n$. 

A {\em grid graph} is a graph $G$ embedded in an integer lattice such that each edge is either horizontal or vertical. 
A {\em grid layered planar graph} is a planar graph $G$ embedded in an integer lattice such that if there is an edge between $(a,b)$ and $(c,d)$ then $|a-c| \leq 1$. Length and width of a grid layered planar graph are the number of columns and rows in the graph respectively.

For a linear arrangement of vertices of a graph $G$, the maximum number of edges cut by any vertical line is called cutwidth of the linear arrangement of $G$. {\em Cutwidth} of $G$ is the minimum cutwidth of a linear arrangement over all possible linear arrangements of $G$. For a linear arrangement of vertices of a graph $G$  without edge crossings, the maximum number of edges cut by any vertical line is called planar cutwidth of the linear arrangement. {\em Planar cutwidth} of $G$ is the minimum planar cutwidth of a linear arrangement over all possible planar linear arrangements of $G$. If a planar linear arrangement of $G$ is not possible, for example in non-planar graphs, we define planar cutwidth of $G$ to be infinite. Note that planar cutwidth of planar graphs is not same as cutwidth of planar graphs.

Graphs with bounded planar cutwidth can be converted into grid layered planar graph preserving matching and bipartiteness. Since constructing such an embedding is not known to be in ${\NC}^1$ and supposed to be hard for ${\NC}^1$, we will assume that input is provided as a bipartite grid layered planar graph. This assumption on input is also made by Hansen et al. \cite{Hansen2014}. Here we consider the circuit complexity of perfect matching in bipartite grid layered planar graphs. Formally {\BGGM} is the set of instances of bipartite grid layered planar graphs along with their embeddings such that they have a perfect matching.  


\subsection{Algebraic Characterization of Classes in $\NC^1$}

We start by describing the definition of bounded width polynomial size programs over monoids as given in \cite{Barrington-NUDFA}. An {\em instruction} $I$ over monoid $\mathcal{M}$ is a 3-tuple $\langle j, a_0, a_1 \rangle$ where $j \in \mathbb{N}$ and $a_0, a_1 \in \mathcal{M}$. For some string $x \in \{0,1\}^{*}$, we define $I(x) = a_{x_j}$. For some $n \in \mathbb{N}$, a {\em bounded width polynomial size branching program} (in short {\BWBP}) is a tuple of polynomial number of instruction over some finite monoid $\mathcal{M}$. If $P = (I_1,I_2, \ldots , I_l)$, then for all strings $x \in \{0,1\}^{n}$, $P(x) = \prod_{i=1}^{l} I_{i}(x)$ where $I_i$'s are instructions over the monoid $\mathcal{M}$, $l$ is a polynomial in $n$ and product is the operation over monoid. Given an accepting set $\mathcal{A} \subseteq \mathcal{M}$, we say a program $P$ recognizes string $x$ if and only if $P(x) \in \mathcal{A}$. We say a language $L$ is recognized by a family of {\BWBP}, $\langle P_n \rangle$ if and only if $P_n$ recognizes exactly the set of all length $n$ strings in $L$. 

In a seminal work in 1986, Barrington gave the following characterization of $\NC^1$.

\begin{theorem}\label{thm:barring}
\cite{Barrington-NUDFA} A language $L$ is in $\NC^1$ if and only if $L$ is recognized by a family of {\BWBP} over some finite monoid.
\end{theorem}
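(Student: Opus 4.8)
The plan is to prove both directions. The easy direction is that any language recognized by a family of BWBPs over a finite monoid lies in $\NC^1$: given a program $P_n = (I_1, \ldots, I_l)$ with $l = \mathrm{poly}(n)$, each instruction $I_i(x) = a_{x_j}$ is computed by a constant-size selector on a single input bit, and then one multiplies the $l$ monoid elements together using a balanced binary tree of depth $O(\log l) = O(\log n)$. Since $\mathcal{M}$ is finite, each multiplication gate has constant fan-in and is realized by a constant-size subcircuit, so the whole thing is an $\NC^1$ circuit; finally testing membership in the accepting set $\mathcal{A}$ is a constant-depth postprocessing step.

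The substantive direction is that every $L \in \NC^1$ is recognized by a family of BWBPs. First I would fix an $\NC^1$ formula family for $L$ (using that bounded-fan-in logarithmic-depth circuits can be rebalanced into $O(\log n)$-depth formulas, and that $\AND$, $\OR$, $\NOT$ can all be expressed via $\NOT$ and $\mathrm{NAND}$, or simply via $\AND$ and $\NOT$). The monoid I would use is the group $S_5$, the symmetric group on five letters, which is non-solvable; the key group-theoretic fact is that $S_5$ (indeed $A_5$) contains two $5$-cycles whose commutator is again a $5$-cycle, so one can ``amplify'' outputs. I would say a program $P$ \emph{$\sigma$-computes} a gate $g$, for a $5$-cycle $\sigma$, if $P(x)$ evaluates to $\sigma$ when $g(x)=1$ and to the identity when $g(x)=0$. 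Then I would prove by induction on formula depth that every gate can be $\sigma$-computed by a program of length $4^{d}$ where $d$ is the depth: for a $\NOT$ gate one post-composes with $\sigma^{-1}$ and relabels; for an $\AND$ gate with inputs $g_1, g_2$, pick $5$-cycles $\tau_1, \tau_2$ with $[\tau_1,\tau_2] = \tau_1 \tau_2 \tau_1^{-1} \tau_2^{-1}$ a $5$-cycle, recursively build programs $\tau_i$-computing $g_i$, and concatenate the four blocks (using that inverting a program just inverts and reverses its instructions, still over the same monoid). One checks that the commutator is the identity unless both $g_i = 1$. Since $d = O(\log n)$, the resulting program has polynomial length; a final constant relabeling makes the output a fixed $5$-cycle on acceptance and the identity otherwise, and the accepting set is that one group element.

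The main obstacle is the inductive construction for $\AND$ (equivalently, the commutator trick): one must be careful that the $5$-cycles chosen for the two subformulas are compatible, i.e. that there genuinely exist $5$-cycles $\tau_1, \tau_2 \in S_5$ with $[\tau_1, \tau_2]$ equal to any prescribed target $5$-cycle $\sigma$ (this is where non-solvability is used essentially, and it fails for any solvable group), and that the recursion can re-target the subprograms to whatever $\tau_i$ the current step demands without blowing up the length beyond a constant factor per level. Keeping the bookkeeping of ``which $5$-cycle is computed where'' consistent through the recursion, and verifying the base case of input-literal gates, are the details that require care; everything else is routine.
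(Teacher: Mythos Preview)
Your proposal is a correct and standard outline of Barrington's original argument. However, there is nothing in the paper to compare it against: the paper does not prove Theorem~\ref{thm:barring} at all. The theorem is stated with the citation \cite{Barrington-NUDFA} and used as a black box; no proof (or even proof sketch) appears anywhere in the paper. The same is true of the companion result, Theorem~\ref{thm:barring_therien}, which is likewise only cited.

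So your write-up is not so much a ``different route'' as simply \emph{a} route where the paper supplies none. If the goal is to match the paper, the appropriate treatment is a one-line reference to \cite{Barrington-NUDFA}. If the goal is an independent proof, your sketch is fine; the only point I would tighten is the re-targeting step: you should state explicitly that if a program $\sigma$-computes $g$ then conjugating every instruction's outputs by a fixed permutation $\pi$ yields a program that $(\pi\sigma\pi^{-1})$-computes $g$, with no increase in length. This is what lets you freely choose the $\tau_i$ at each $\AND$ node, and together with the concrete witness (e.g.\ $\tau_1=(12345)$, $\tau_2=(13542)$, $[\tau_1,\tau_2]=(13254)$) it closes the bookkeeping issue you flagged.
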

In the following year Barrington and Th\'{e}rien extended their characterization to other subclasses in $\NC^1$. 
\begin{theorem}\label{thm:barring_therien}
\cite{Barrington-Therien} For a language $L$ we have,
\begin{enumerate}
\item $L$ is in ${\AC}^0$ if and only if $L$ is recognized by a family of {\BWBP} over an aperiodic finite monoid, 
\item \label{thm:ac0m} $L$ is in ${\AC}^0[p^{\alpha}]$ for a prime $p$ and constant $\alpha$ if and only if $L$ is recognized by a family of {\BWBP} over a solvable finite monoid in which all groups have order that divide power of $p$, and,
\item $L$ is in ${\ACC}^0$ if and only if $L$ is recognized by a family of {\BWBP} over a solvable finite monoid, 
\end{enumerate}
\end{theorem}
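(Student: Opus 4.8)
The plan is to prove each of the three equivalences by setting up a two-way translation between the circuit class and the corresponding pseudovariety of finite monoids, mirroring the two directions of Barrington's Theorem~\ref{thm:barring}. The monoid-theoretic ingredient I would take off the shelf is the classical structure theory of finite monoids: the pseudovariety of aperiodic monoids is generated under wreath product by the two-element monoid $U_1 = (\{1,0\},\cdot)$ with $0$ absorbing (Sch\"utzenberger / Krohn--Rhodes); the pseudovariety of finite $p$-groups is generated under wreath product by the cyclic group $\mathbb{Z}_p$ (every $p$-group has a composition series with $\mathbb{Z}_p$ quotients, and iterated wreath products of $\mathbb{Z}_p$ are $p$-groups since orders multiply); the pseudovariety of monoids all of whose subgroups are $p$-groups is generated by $\{U_1,\mathbb{Z}_p\}$; and the pseudovariety of solvable monoids is generated by $\{U_1\}\cup\{\mathbb{Z}_q : q \text{ prime}\}$. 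Since the monoid underlying a \BWBP\ is fixed and finite, each such decomposition has a \emph{constant} number of wreath layers, which is what will keep circuit depth bounded.

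For the direction ``circuit $\Rightarrow$ program'' I would induct on the depth of an $\AC^0$ (resp.\ $\AC^0[p^\alpha]$, resp.\ $\ACC^0$) circuit. A literal or negated literal is recognized by a one-instruction program over $U_1$. For the inductive step, suppose the output gate is $\AND$/$\OR$ (or $\Mod_m$) applied to subcircuits $C_1,\dots,C_k$, each of which --- by the inductive hypothesis --- is recognized by a \BWBP\ over a monoid $M_i$ in the appropriate pseudovariety $\mathbf V$. I then need a combining lemma: there is a monoid $M \in \mathbf V$, built from the $M_i$'s by a finite direct product followed by a wreath (or block) product with $U_1$ in the $\AND/\OR$ case and with $\mathbb{Z}_p$ in the $\Mod_m$ case (here one uses that $\Mod_{p^\alpha}$ and $\Mod_p$ define the same languages, so for $\AC^0[p^\alpha]$ it suffices to adjoin $\mathbb{Z}_p$), together with a polynomial-size bounded-width program over $M$ that concatenates simulations of the $P_i$'s and has the outer factor accumulate the desired Boolean/modular combination of their accept bits. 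That $M$ remains in $\mathbf V$ is exactly where closure of $\mathbf V$ under direct product, division, and the relevant wreath/block products is invoked; and since the circuit has constant depth only constantly many such layers are stacked, so both width and program length stay controlled.

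For ``program $\Rightarrow$ circuit'' I would start from a \BWBP\ over a fixed monoid $M$ in the pseudovariety, write $M$ as a divisor of an iterated wreath product $W_1\wr\cdots\wr W_t$ with factors drawn from the generators ($U_1$ for aperiodic; $U_1$ and $\mathbb{Z}_p$ for the $\AC^0[p^\alpha]$ case; $U_1$ together with all $\mathbb{Z}_q$ for solvable), where $t$ depends only on $M$ and hence is constant. A program over a wreath product unfolds into $t$ nested levels; a program over a single $U_1$ is literally an unbounded-fan-in $\OR$, and over its dual an $\AND$, while a program over a single $\mathbb{Z}_p$ is a $\Mod_p$ computation --- each implementable in constant depth and polynomial size. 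Composing the $t=O(1)$ levels gives a constant-depth, polynomial-size circuit of the allowed type, proving the containment, and one checks that uniformity survives because every step of the construction is explicit.

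I expect the main obstacle to be the fine accounting needed to make part~\ref{thm:ac0m} tight, i.e.\ to guarantee that the monoid manufactured from an $\AC^0[p^\alpha]$ circuit never acquires a subgroup whose order is divisible by a prime other than $p$. This forces one to verify that \emph{every} operation used in the combining lemma preserves the property ``all subgroups are $p$-groups'': direct product and division are immediate, wreath product with $\mathbb{Z}_p$ multiplies subgroup orders only by powers of $p$, and the subtle case is that forming a wreath or block product with the aperiodic $U_1$ creates no new subgroups at all --- a statement that itself rests on the Krohn--Rhodes description of aperiodic monoids. I would handle the $\ACC^0$ equivalence first as a warm-up, where ``solvable'' is preserved by all of these operations with much less bookkeeping, and then specialize the argument to the $p$-power and aperiodic cases.
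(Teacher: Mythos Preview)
The paper does not prove Theorem~\ref{thm:barring_therien}. It is quoted from \cite{Barrington-Therien} without proof, with the remark that part~\ref{thm:ac0m} ``is not directly stated in \cite{Barrington-Therien} but can be derived from their proof as also claimed in \cite{Hansen2014}.'' So there is no proof in the paper to compare your proposal against.

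That said, your sketch is a faithful outline of the Barrington--Th\'erien argument itself: the two-way translation via induction on circuit depth in one direction and a Krohn--Rhodes/block-product decomposition of the fixed monoid in the other is exactly the strategy of \cite{Barrington-Therien}. Two small points of accuracy. First, Barrington and Th\'erien work with \emph{block products} rather than wreath products throughout; your sketch mentions both, and the closure facts you cite (aperiodics generated by $U_1$, solvable monoids by $U_1$ together with cyclic $p$-groups) are the block-product versions, so you should state them as such. Second, the sentence ``forming a wreath or block product with the aperiodic $U_1$ creates no new subgroups at all'' is not quite right as written: a block product $M \mathbin{\Box} U_1$ can certainly have larger subgroups than $M$ does (e.g.\ a direct power of a subgroup of $M$). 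What is true, and what you need, is that every subgroup of $M \mathbin{\Box} N$ divides a direct product of subgroups of $M$ and $N$, so if all subgroups of $M$ and of $N$ are $p$-groups then so are all subgroups of the block product. With that correction the bookkeeping for part~\ref{thm:ac0m} goes through.
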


Part \ref{thm:ac0m} of Theorem \ref{thm:barring_therien} is not directly stated in \cite{Barrington-Therien} but can be derived from their proof as also claimed in \cite{Hansen2014}. We will use these results to show that {\BGGM} is not in $\AC^0[p^\alpha]$ where $p$ is prime and $\alpha \in \mathbb{N}$.

%
%
%
%

\section{{\BGGM} is as hard as {\Parity}}
\label{sec:bggm2parity}
In this section we will give an $\AC^0$ reduction from {\Parity} to {\BGGM}. 

Let $x = x_1x_2\ldots x_n \in \{0,1\}^*$ be an instance of {\Parity}. Define a function $f(x)$ as 
\[f(x) = 0 \bd (0x_10x_20\ldots 0x_n 0)0.\]
where $\bd$ is the bit-double function defined as $\bd(y_1y_2 \ldots y_n) = y_1y_1y_2y_2 \ldots y_ny_n$. Clearly, $f$ is an $\AC^0$ computable function. Note that $f(x)$ always has even length and we can visualize $f(x)$ as concatenation of pairs of $2$ bits. That is, the first pair contains the first and second bits of $f(x)$, second pair contains the third and fourth bits of $f(x)$ and so on. We will call these pairs as {\em constituent pairs} of $f(x)$. We note some properties of $f(x)$ that can easily be verified.

\begin{lemma}\label{lemma:pair}
For every string $x \in \{0,1\}^*$,
\begin{itemize}
\item $11$ cannot be a constituent pair of $f(x)$, and
\item in $f(x)$, a constituent pair $01$ is always succeeded by the constituent pair of $10$ and a constituent pair $10$ is always preceded by the constituent pair $01$.
\end{itemize}
\end{lemma}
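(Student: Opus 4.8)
The plan is to verify both bullet points by unwinding the definition of $f(x) = 0\,\bd(0x_10x_20\ldots 0x_n0)\,0$ and analyzing how the constituent pairs align with the underlying string $\bd(\cdots)$. The key observation is that applying $\bd$ to a string $z = z_1z_2\ldots z_m$ produces $z_1z_1z_2z_2\ldots z_mz_m$, which has even length $2m$, and that after prepending and appending a single $0$ we get a string of length $2m+2$, again even. I would first fix notation: write $z = 0x_10x_20\ldots 0x_n0$, so $z$ has length $2n+1$ (odd), and $f(x) = 0 \cdot \bd(z) \cdot 0$ has length $2(2n+1)+2 = 4n+4$.

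For the first bullet, I would argue that every constituent pair of $f(x)$ is either a pair of identical bits $z_jz_j$ coming from within $\bd(z)$, or one of the two boundary pairs formed by the prepended/appended $0$. Concretely, since the leading $0$ and the first bit of $\bd(z)$ (namely $z_1 = 0$, the first symbol of $z$) together form the first constituent pair $00$; the interior of $\bd(z)$ then contributes the pairs $z_1z_2, z_2z_3, \ldots$ — wait, here care is needed with the parity of the offset. The cleanest route: $f(x)$ has the form $0\, z_1 z_1 z_2 z_2 \cdots z_m z_m\, 0$ with $m = 2n+1$; grouping into constituent pairs from the left gives $(0\,z_1)(z_1\,z_2)(z_2\,z_3)\cdots(z_{m-1}\,z_m)(z_m\,0)$. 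Since $z_1 = z_m = 0$ (both are the padding zeros in $z$), the first pair is $00$ and the last pair is $00$. Each interior pair is $(z_i\, z_{i+1})$ for $1 \le i \le m-1$. Now in $z = 0x_10x_20\ldots0x_n0$, the entries at odd positions are all $0$, so $z_i z_{i+1}$ is never $11$: at least one of any two consecutive positions is even-or-odd indexed such that the odd-indexed one is $0$. Hence $11$ never occurs as a constituent pair, establishing the first bullet.

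For the second bullet, I would use the same indexing. A constituent pair equal to $01$ is an interior pair $(z_i\, z_{i+1}) = (0,1)$; since odd-indexed entries of $z$ are $0$, this forces $i$ even and $i+1$ odd — but odd-indexed entries are $0$, contradiction — so instead $i$ must be odd with $z_i = 0$ and $i+1$ even with $z_{i+1} = 1$, i.e. $z_{i+1} = x_k = 1$ for the appropriate $k$. Then the very next constituent pair is $(z_{i+2}\, z_{i+3})$, and $z_{i+2}$ is the duplicated copy… no: $z_{i+1} z_{i+1}$ are adjacent in $\bd(z)$, so the next pair after $(z_i z_{i+1})$ in $f(x)$ is $(z_{i+1}\, z_{i+2})$ — I need to recheck the grouping offset, but once it is pinned down, $z_{i+1} = 1$ and $z_{i+2} = 0$ (position $i+2$ is odd in $z$), giving the pair $10$; this proves $01$ is always followed by $10$. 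The symmetric argument — a pair $10$ is $(z_{j}\, z_{j+1}) = (1,0)$, forcing the preceding pair to contain the other duplicated copy of that $1$ and hence be $01$ — gives the second half. The main obstacle is purely bookkeeping: getting the alignment offset between "constituent pairs of $f(x)$" and "consecutive positions of $z$" exactly right, which hinges on $z$ having odd length so that the single prepended $0$ shifts every duplicated block $z_iz_i$ to straddle a pair boundary rather than sit inside one. Once that offset is fixed, both bullets follow immediately from the fact that $z$ has $0$ in every odd position.
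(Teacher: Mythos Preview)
Your approach is correct. The paper gives no proof of this lemma at all; it merely states that the two properties ``can easily be verified,'' so there is nothing to compare against. Your explicit computation --- writing $z = 0x_10x_2\cdots 0x_n0$ of odd length $m = 2n+1$, so that $f(x) = 0\,z_1z_1z_2z_2\cdots z_mz_m\,0$ and the constituent pairs are exactly $(0,z_1),(z_1,z_2),\ldots,(z_{m-1},z_m),(z_m,0)$ --- is precisely the natural verification. The crucial observation you identify, that the single prepended $0$ shifts the grouping so that each interior constituent pair is $(z_i,z_{i+1})$ rather than $(z_i,z_i)$, is what makes both bullets immediate from the fact that every odd-indexed entry of $z$ is $0$.

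Two minor points worth tightening when you write it up cleanly. First, your proposal contains several false starts (``wait, here care is needed\ldots'', ``no: \ldots''); once you have fixed the pair indexing as $(z_i,z_{i+1})$ these detours disappear. Second, for the second bullet you should briefly note that a constituent pair equal to $01$ cannot be the last pair (since $z_m = 0$ forces $(z_{m-1},z_m) \ne 01$ and the final pair $(z_m,0) = 00$), and symmetrically a pair equal to $10$ cannot be the first pair; this ensures the ``successor'' and ``predecessor'' pairs you refer to actually exist. With those edits the argument is complete.
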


Using $f(x)$ we construct a bipartite grid layered planar graph $G_x$, such that, $G_x$ has a perfect matching if and only if $x$ has even parity. Also, we will show that $G_x$ can be constructed from $f(x)$ in $\AC^0$. This will imply that {\Parity} reduces to {\BGGM}.

First we define graph blocks $G_{00}$, $G_{01}$ and $G_{10}$ corresponding to the three constituent pairs $00$, $01$ and $10$ respectively as shown in Figure \ref{fig:Blocks}. Note that $11$ cannot be a constituent pair hence we do not define a graph corresponding to it. These graph blocks will be the constituent elements of the graph $G_x$. 
\captionsetup[figure]{justification=centering}
\begin{figure}[!htb]
\captionsetup[subfigure]{justification=centering}
    \begin{subfigure}{.3\textwidth}
        \centering
\begin{tikzpicture}[scale=0.60]
\foreach \x in {0,1}
\foreach \y in {0,1,2,3,4,5}
    \filldraw[fill=red!10!white, draw=blue] (\x cm,\y cm)  circle (0.05cm);
\foreach \x in {0,1}
    \filldraw[fill=black!90!white, draw=black] (\x cm,0cm) circle (0.1cm);
\foreach \x in {0,1}
    \filldraw[fill=black!90!white, draw=black] (\x cm,5cm) circle (0.1cm);
\filldraw[fill=black!90!white, draw=black] (0cm,2cm) circle (0.1cm);
\filldraw[fill=black!90!white, draw=black] (0cm,3cm) circle (0.1cm);
\draw [black,thick] (0,2) -- (0,3);
\draw [black,thick] (0,0) -- (1,0);
\draw [black,thick] (0,5) -- (1,5);
\end{tikzpicture}
\caption{$G_{00}$}
\label{fig:G00}
    \end{subfigure}
    \begin{subfigure}{0.3\textwidth}
        \centering
\begin{tikzpicture}[scale=0.60]
\foreach \x in {0,1}
\foreach \y in {0,1,2,3,4,5}
    \filldraw[fill=red!10!white, draw=blue] (\x cm,\y cm)  circle (0.05cm);
\foreach \x in {0,1}
    \filldraw[fill=black!90!white, draw=black] (\x cm,1cm) circle (0.1cm);
\foreach \x in {0,1}
    \filldraw[fill=black!90!white, draw=black] (\x cm,4cm) circle (0.1cm);
\filldraw[fill=black!90!white, draw=black] (1cm,2cm) circle (0.1cm);
\filldraw[fill=black!90!white, draw=black] (1cm,3cm) circle (0.1cm);
\draw [black,thick] (0,1) -- (1,1);
\draw [black,thick] (0,4) -- (1,4);
\draw [black,thick] (1,2) -- (1,3);
\end{tikzpicture}
\caption{$G_{01}$}
\label{fig:G01}
    \end{subfigure}
    \begin{subfigure}{0.3\textwidth}
        \centering
\begin{tikzpicture}[scale=0.60]
\foreach \x in {0,1}
\foreach \y in {0,1,2,3,4,5}
    \filldraw[fill=red!10!white, draw=blue] (\x cm,\y cm)  circle (0.05cm);
\foreach \x in {0,1}
    \filldraw[fill=black!90!white, draw=black] (\x cm,0cm) circle (0.1cm);
\foreach \x in {0,1}
    \filldraw[fill=black!90!white, draw=black] (\x cm,5cm) circle (0.1cm);
\filldraw[fill=black!90!white, draw=black] (0cm,1cm) circle (0.1cm);
\filldraw[fill=black!90!white, draw=black] (0cm,4cm) circle (0.1cm);
\draw [black,thick] (0,0) -- (1,0);
\draw [black,thick] (0,0) -- (0,1);
\draw [black,thick] (0,5) -- (1,5);
\draw [black,thick] (0,5) -- (0,4);
\end{tikzpicture}
\caption{$G_{10}$}
\label{fig:G10}
\end{subfigure}
\caption{Different types of Graph Blocks}
\label{fig:Blocks}
\end{figure}
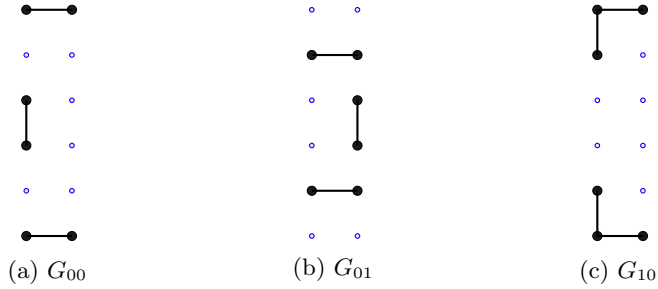

We also define an operator $\odot$ over these graphs which allows us to define larger graphs using these graph blocks. $\odot$ operator is defined in Figure \ref{fig:Rules}.

\begin{figure}[!htb]
\captionsetup[subfigure]{justification=centering}
    \centering
    \begin{subfigure}{.2\textwidth}
    \centering
    \begin{center}
    \begin{tikzpicture}[scale=0.60]
\foreach \x in {0,1}
\foreach \y in {0,1,2,3,4,5}
    \filldraw[fill=red!10!white, draw=blue] (\x cm,\y cm)  circle (0.05cm);
\foreach \x in {0,1}
    \filldraw[fill=black!90!white, draw=black] (\x cm,0cm) circle (0.1cm);
\foreach \x in {0,1}
    \filldraw[fill=black!90!white, draw=black] (\x cm,5cm) circle (0.1cm);
\filldraw[fill=black!90!white, draw=black] (0cm,2cm) circle (0.1cm);
\filldraw[fill=black!90!white, draw=black] (0cm,3cm) circle (0.1cm);
\draw [black,thick] (0,2) -- (0,3);
\draw [black,thick] (0,0) -- (1,0);
\draw [black,thick] (0,5) -- (1,5);
\foreach \x in {2,3}
\foreach \y in {0,1,2,3,4,5}
    \filldraw[fill=red!10!white, draw=blue] (\x cm,\y cm)  circle (0.05cm);
\foreach \x in {2,3}
    \filldraw[fill=black!90!white, draw=black] (\x cm,0cm) circle (0.1cm);
\foreach \x in {2,3}
    \filldraw[fill=black!90!white, draw=black] (\x cm,5cm) circle (0.1cm);
\filldraw[fill=black!90!white, draw=black] (2cm,2cm) circle (0.1cm);
\filldraw[fill=black!90!white, draw=black] (2cm,3cm) circle (0.1cm);
\draw [black,thick] (2,2) -- (2,3);
\draw [black,thick] (2,0) -- (3,0);
\draw [black,thick] (2,5) -- (3,5);
\draw [black,thick] (1,5) -- (2,5);
\draw [black,thick] (1,0) -- (2,0);
\end{tikzpicture}
\end{center}
\caption{$G_{00}\odot G_{00}$}
\label{fig:G00G00}
\end{subfigure}
\hfill
\begin{subfigure}{0.2\textwidth}
    \centering
\begin{center}
\begin{tikzpicture}[scale=0.60]
\foreach \x in {0,1}
\foreach \y in {0,1,2,3,4,5}
    \filldraw[fill=red!10!white, draw=blue] (\x cm,\y cm)  circle (0.05cm);
\foreach \x in {0,1}
    \filldraw[fill=black!90!white, draw=black] (\x cm,0cm) circle (0.1cm);
\foreach \x in {0,1}
    \filldraw[fill=black!90!white, draw=black] (\x cm,5cm) circle (0.1cm);
\filldraw[fill=black!90!white, draw=black] (0cm,2cm) circle (0.1cm);
\filldraw[fill=black!90!white, draw=black] (0cm,3cm) circle (0.1cm);
\draw [black,thick] (0,2) -- (0,3);
\draw [black,thick] (0,0) -- (1,0);
\draw [black,thick] (0,5) -- (1,5);
\foreach \x in {2,3}
\foreach \y in {0,1,2,3,4,5}
    \filldraw[fill=red!10!white, draw=blue] (\x cm,\y cm)  circle (0.05cm);
\foreach \x in {2,3}
    \filldraw[fill=black!90!white, draw=black] (\x cm,1cm) circle (0.1cm);
\foreach \x in {2,3}
    \filldraw[fill=black!90!white, draw=black] (\x cm,4cm) circle (0.1cm);
\filldraw[fill=black!90!white, draw=black] (3cm,2cm) circle (0.1cm);
\filldraw[fill=black!90!white, draw=black] (3cm,3cm) circle (0.1cm);
\draw [black,thick] (2,1) -- (3,1);
\draw [black,thick] (2,4) -- (3,4);
\draw [black,thick] (3,2) -- (3,3);
\draw [black,thick] (1,5) -- (2,4);
\draw [black,thick] (1,0) -- (2,1);
\end{tikzpicture}
\end{center}
\caption{$G_{00}\odot G_{01}$}
\label{fig:G00G01}
\end{subfigure}
\hfill
\begin{subfigure}{0.2\textwidth}
\centering
\begin{center}
\begin{tikzpicture}[scale=0.60]
\foreach \x in {0,1}
\foreach \y in {0,1,2,3,4,5}
    \filldraw[fill=red!10!white, draw=blue] (\x cm,\y cm)  circle (0.05cm);
\foreach \x in {0,1}
    \filldraw[fill=black!90!white, draw=black] (\x cm,0cm) circle (0.1cm);
\foreach \x in {0,1}
    \filldraw[fill=black!90!white, draw=black] (\x cm,5cm) circle (0.1cm);
\filldraw[fill=black!90!white, draw=black] (0cm,1cm) circle (0.1cm);
\filldraw[fill=black!90!white, draw=black] (0cm,4cm) circle (0.1cm);
\draw [black,thick] (0,0) -- (1,0);
\draw [black,thick] (0,0) -- (0,1);
\draw [black,thick] (0,5) -- (1,5);
\draw [black,thick] (0,5) -- (0,4);
\foreach \x in {2,3}
\foreach \y in {0,1,2,3,4,5}
    \filldraw[fill=red!10!white, draw=blue] (\x cm,\y cm)  circle (0.05cm);
\foreach \x in {2,3}
    \filldraw[fill=black!90!white, draw=black] (\x cm,0cm) circle (0.1cm);
\foreach \x in {2,3}
    \filldraw[fill=black!90!white, draw=black] (\x cm,5cm) circle (0.1cm);
\filldraw[fill=black!90!white, draw=black] (2cm,2cm) circle (0.1cm);
\filldraw[fill=black!90!white, draw=black] (2cm,3cm) circle (0.1cm);
\draw [black,thick] (2,2) -- (2,3);
\draw [black,thick] (2,0) -- (3,0);
\draw [black,thick] (2,5) -- (3,5);
\draw [black,thick] (1,0) -- (2,0);
\draw [black,thick] (1,5) -- (2,5);
\end{tikzpicture}
\end{center}
\caption{$G_{10}\odot G_{00}$}
\label{fig:G10G00}
\end{subfigure}
\hfill
\\
\vspace*{1cm}
\hfill
\begin{subfigure}{.2\textwidth}
\begin{center}
\begin{tikzpicture}[scale=0.60]
\foreach \x in {0,1}
\foreach \y in {0,1,2,3,4,5}
    \filldraw[fill=red!10!white, draw=blue] (\x cm,\y cm)  circle (0.05cm);
\foreach \x in {0,1}
    \filldraw[fill=black!90!white, draw=black] (\x cm,1cm) circle (0.1cm);
\foreach \x in {0,1}
    \filldraw[fill=black!90!white, draw=black] (\x cm,4cm) circle (0.1cm);
\filldraw[fill=black!90!white, draw=black] (1cm,2cm) circle (0.1cm);
\filldraw[fill=black!90!white, draw=black] (1cm,3cm) circle (0.1cm);
\draw [black,thick] (0,1) -- (1,1);
\draw [black,thick] (0,4) -- (1,4);
\draw [black,thick] (1,2) -- (1,3);
\foreach \x in {2,3}
\foreach \y in {0,1,2,3,4,5}
    \filldraw[fill=red!10!white, draw=blue] (\x cm,\y cm)  circle (0.05cm);
\foreach \x in {2,3}
    \filldraw[fill=black!90!white, draw=black] (\x cm,0cm) circle (0.1cm);
\foreach \x in {2,3}
    \filldraw[fill=black!90!white, draw=black] (\x cm,5cm) circle (0.1cm);
\filldraw[fill=black!90!white, draw=black] (2cm,1cm) circle (0.1cm);
\filldraw[fill=black!90!white, draw=black] (2cm,4cm) circle (0.1cm);
\draw [black,thick] (2,0) -- (3,0);
\draw [black,thick] (2,0) -- (2,1);
\draw [black,thick] (2,5) -- (3,5);
\draw [black,thick] (2,5) -- (2,4);
\draw [black,thick] (1,1) -- (2,1);
\draw [black,thick] (1,4) -- (2,4);
\end{tikzpicture}
\end{center}

\caption{$G_{01}\odot G_{10}$}
\label{fig:G01G10}
\end{subfigure}
\hfill
\begin{subfigure}{0.2\textwidth}
\centering
\begin{center}
\begin{tikzpicture}[scale=0.60]
\foreach \x in {0,1}
\foreach \y in {0,1,2,3,4,5}
    \filldraw[fill=red!10!white, draw=blue] (\x cm,\y cm)  circle (0.05cm);
\foreach \x in {0,1}
    \filldraw[fill=black!90!white, draw=black] (\x cm,0cm) circle (0.1cm);
\foreach \x in {0,1}
    \filldraw[fill=black!90!white, draw=black] (\x cm,5cm) circle (0.1cm);
\filldraw[fill=black!90!white, draw=black] (0cm,1cm) circle (0.1cm);
\filldraw[fill=black!90!white, draw=black] (0cm,4cm) circle (0.1cm);
\draw [black,thick] (0,0) -- (1,0);
\draw [black,thick] (0,0) -- (0,1);
\draw [black,thick] (0,5) -- (1,5);
\draw [black,thick] (0,5) -- (0,4);
\foreach \x in {2,3}
\foreach \y in {0,1,2,3,4,5}
    \filldraw[fill=red!10!white, draw=blue] (\x cm,\y cm)  circle (0.05cm);
\foreach \x in {2,3}
    \filldraw[fill=black!90!white, draw=black] (\x cm,1cm) circle (0.1cm);
\foreach \x in {2,3}
    \filldraw[fill=black!90!white, draw=black] (\x cm,4cm) circle (0.1cm);
\filldraw[fill=black!90!white, draw=black] (3cm,2cm) circle (0.1cm);
\filldraw[fill=black!90!white, draw=black] (3cm,3cm) circle (0.1cm);
\draw [black,thick] (2,1) -- (3,1);
\draw [black,thick] (2,4) -- (3,4);
\draw [black,thick] (3,2) -- (3,3);
\draw [black,thick] (1,0) -- (2,1);
\draw [black,thick] (1,5) -- (2,4);
\end{tikzpicture}
\end{center}

\caption{$G_{10}\odot G_{01}$}
\label{fig:G10G01}
\end{subfigure}
\hfill
\hfill
\caption{Joining different Blocks with $\odot$ operation}
\label{fig:Rules}
\end{figure}

We now complete the construction of $G_x$. Let $y=f(x)=y_1y_2\ldots y_m$. where $m$ is even. Then
\[G_x = G_{y_1y_2} \odot G_{y_3y_4} \odot \ldots \odot G_{y_{m-1} y_{m}}.\]
For example if $x=1101$, then $f(x) = 0bd(010100010)0=00 01 10 01 10 00 00 01 10 00$ and $G_x$ will be as shown in Figure \ref{fig:exampleG}.

\begin{figure}

\captionsetup[subfigure]{justification=centering}
    \centering
\begin{center}
\begin{tikzpicture}[scale=0.60]
\foreach \x in {0,1}
\foreach \y in {0,1,2,3,4,5}
    \filldraw[fill=red!10!white, draw=blue] (\x cm,\y cm)  circle (0.05cm);
\foreach \x in {0,1}
    \filldraw[fill=black!90!white, draw=black] (\x cm,0cm) circle (0.1cm);
\foreach \x in {0,1}
    \filldraw[fill=black!90!white, draw=black] (\x cm,5cm) circle (0.1cm);
\filldraw[fill=black!90!white, draw=black] (0cm,2cm) circle (0.1cm);
\filldraw[fill=black!90!white, draw=black] (0cm,3cm) circle (0.1cm);
\draw [black,thick] (0,2) -- (0,3);
\draw [black,thick] (0,0) -- (1,0);
\draw [black,thick] (0,5) -- (1,5);
\foreach \x in {2,3}
\foreach \y in {0,1,2,3,4,5}
    \filldraw[fill=red!10!white, draw=blue] (\x cm,\y cm)  circle (0.05cm);
\foreach \x in {2,3}
    \filldraw[fill=black!90!white, draw=black] (\x cm,1cm) circle (0.1cm);
\foreach \x in {2,3}
    \filldraw[fill=black!90!white, draw=black] (\x cm,4cm) circle (0.1cm);
\filldraw[fill=black!90!white, draw=black] (3cm,2cm) circle (0.1cm);
\filldraw[fill=black!90!white, draw=black] (3cm,3cm) circle (0.1cm);
\draw [black,thick] (2,1) -- (3,1);
\draw [black,thick] (2,4) -- (3,4);
\draw [black,thick] (3,2) -- (3,3);
\draw [black,thick] (1,5) -- (2,4);
\draw [black,thick] (1,0) -- (2,1);

\foreach \x in {4,5}
\foreach \y in {0,1,2,3,4,5}
    \filldraw[fill=red!10!white, draw=blue] (\x cm,\y cm)  circle (0.05cm);
\foreach \x in {4,5}
    \filldraw[fill=black!90!white, draw=black] (\x cm,0cm) circle (0.1cm);
\foreach \x in {4,5}
    \filldraw[fill=black!90!white, draw=black] (\x cm,5cm) circle (0.1cm);
\filldraw[fill=black!90!white, draw=black] (4cm,1cm) circle (0.1cm);
\filldraw[fill=black!90!white, draw=black] (4cm,4cm) circle (0.1cm);
\draw [black,thick] (4,0) -- (5,0);
\draw [black,thick] (4,0) -- (4,1);
\draw [black,thick] (4,5) -- (5,5);
\draw [black,thick] (4,5) -- (4,4);
\draw [black,thick] (3,1) -- (4,1);
\draw [black,thick] (3,4) -- (4,4);

\foreach \x in {6,7}
\foreach \y in {0,1,2,3,4,5}
    \filldraw[fill=red!10!white, draw=blue] (\x cm,\y cm)  circle (0.05cm);
\foreach \x in {6,7}
    \filldraw[fill=black!90!white, draw=black] (\x cm,1cm) circle (0.1cm);
\foreach \x in {6,7}
    \filldraw[fill=black!90!white, draw=black] (\x cm,4cm) circle (0.1cm);
\filldraw[fill=black!90!white, draw=black] (7cm,2cm) circle (0.1cm);
\filldraw[fill=black!90!white, draw=black] (7cm,3cm) circle (0.1cm);
\draw [black,thick] (6,1) -- (7,1);
\draw [black,thick] (6,4) -- (7,4);
\draw [black,thick] (7,2) -- (7,3);
\draw [black,thick] (5,5) -- (6,4);
\draw [black,thick] (5,0) -- (6,1);

\foreach \x in {8,9}
\foreach \y in {0,1,2,3,4,5}
    \filldraw[fill=red!10!white, draw=blue] (\x cm,\y cm)  circle (0.05cm);
\foreach \x in {8,9}
    \filldraw[fill=black!90!white, draw=black] (\x cm,0cm) circle (0.1cm);
\foreach \x in {8,9}
    \filldraw[fill=black!90!white, draw=black] (\x cm,5cm) circle (0.1cm);
\filldraw[fill=black!90!white, draw=black] (8cm,1cm) circle (0.1cm);
\filldraw[fill=black!90!white, draw=black] (8cm,4cm) circle (0.1cm);
\draw [black,thick] (8,0) -- (9,0);
\draw [black,thick] (8,0) -- (8,1);
\draw [black,thick] (8,5) -- (9,5);
\draw [black,thick] (8,5) -- (8,4);
\draw [black,thick] (7,1) -- (8,1);
\draw [black,thick] (7,4) -- (8,4);

\foreach \x in {10,11}
\foreach \y in {0,1,2,3,4,5}
    \filldraw[fill=red!10!white, draw=blue] (\x cm,\y cm)  circle (0.05cm);
\foreach \x in {10,11}
    \filldraw[fill=black!90!white, draw=black] (\x cm,0cm) circle (0.1cm);
\foreach \x in {10,11}
    \filldraw[fill=black!90!white, draw=black] (\x cm,5cm) circle (0.1cm);
\filldraw[fill=black!90!white, draw=black] (10cm,2cm) circle (0.1cm);
\filldraw[fill=black!90!white, draw=black] (10cm,3cm) circle (0.1cm);
\draw [black,thick] (10,2) -- (10,3);
\draw [black,thick] (10,0) -- (11,0);
\draw [black,thick] (10,5) -- (11,5);
\draw [black,thick] (9,0) -- (10,0);
\draw [black,thick] (9,5) -- (10,5);

\foreach \x in {12,13}
\foreach \y in {0,1,2,3,4,5}
    \filldraw[fill=red!10!white, draw=blue] (\x cm,\y cm)  circle (0.05cm);
\foreach \x in {12,13}
    \filldraw[fill=black!90!white, draw=black] (\x cm,0cm) circle (0.1cm);
\foreach \x in {12,13}
    \filldraw[fill=black!90!white, draw=black] (\x cm,5cm) circle (0.1cm);
\filldraw[fill=black!90!white, draw=black] (12cm,2cm) circle (0.1cm);
\filldraw[fill=black!90!white, draw=black] (12cm,3cm) circle (0.1cm);
\draw [black,thick] (12,2) -- (12,3);
\draw [black,thick] (12,0) -- (13,0);
\draw [black,thick] (12,5) -- (13,5);
\draw [black,thick] (11,0) -- (12,0);
\draw [black,thick] (11,5) -- (12,5);

\foreach \x in {14,15}
\foreach \y in {0,1,2,3,4,5}
    \filldraw[fill=red!10!white, draw=blue] (\x cm,\y cm)  circle (0.05cm);
\foreach \x in {14,15}
    \filldraw[fill=black!90!white, draw=black] (\x cm,1cm) circle (0.1cm);
\foreach \x in {14,15}
    \filldraw[fill=black!90!white, draw=black] (\x cm,4cm) circle (0.1cm);
\filldraw[fill=black!90!white, draw=black] (15cm,2cm) circle (0.1cm);
\filldraw[fill=black!90!white, draw=black] (15cm,3cm) circle (0.1cm);
\draw [black,thick] (14,1) -- (15,1);
\draw [black,thick] (14,4) -- (15,4);
\draw [black,thick] (15,2) -- (15,3);
\draw [black,thick] (13,5) -- (14,4);
\draw [black,thick] (13,0) -- (14,1);

\foreach \x in {16,17}
\foreach \y in {0,1,2,3,4,5}
    \filldraw[fill=red!10!white, draw=blue] (\x cm,\y cm)  circle (0.05cm);
\foreach \x in {16,17}
    \filldraw[fill=black!90!white, draw=black] (\x cm,0cm) circle (0.1cm);
\foreach \x in {16,17}
    \filldraw[fill=black!90!white, draw=black] (\x cm,5cm) circle (0.1cm);
\filldraw[fill=black!90!white, draw=black] (16cm,1cm) circle (0.1cm);
\filldraw[fill=black!90!white, draw=black] (16cm,4cm) circle (0.1cm);
\draw [black,thick] (16,0) -- (17,0);
\draw [black,thick] (16,0) -- (16,1);
\draw [black,thick] (16,5) -- (17,5);
\draw [black,thick] (16,5) -- (16,4);
\draw [black,thick] (15,1) -- (16,1);
\draw [black,thick] (15,4) -- (16,4);

\foreach \x in {18,19}
\foreach \y in {0,1,2,3,4,5}
    \filldraw[fill=red!10!white, draw=blue] (\x cm,\y cm)  circle (0.05cm);
\foreach \x in {18,19}
    \filldraw[fill=black!90!white, draw=black] (\x cm,0cm) circle (0.1cm);
\foreach \x in {18,19}
    \filldraw[fill=black!90!white, draw=black] (\x cm,5cm) circle (0.1cm);
\filldraw[fill=black!90!white, draw=black] (18cm,2cm) circle (0.1cm);
\filldraw[fill=black!90!white, draw=black] (18cm,3cm) circle (0.1cm);
\draw [black,thick] (18,2) -- (18,3);
\draw [black,thick] (18,0) -- (19,0);
\draw [black,thick] (18,5) -- (19,5);
\draw [black,thick] (17,0) -- (18,0);
\draw [black,thick] (17,5) -- (18,5);

\end{tikzpicture}
\end{center}
\caption{Graph $G_x$ corresponding to the string $x = 1101$}
\label{fig:exampleG}
\end{figure}

\begin{lemma} \label{lem:bggm}
For every string $x \in \{ 0,1 \}^*$, $G_x$ is a bipartite grid layered planar graph. Also, each connected component of $G_x$ is either a single edge or a path that extends from the first block to the last block of $G_x$.
\end{lemma}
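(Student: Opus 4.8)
The plan is to analyse $G_x$ one block at a time and prove, by induction on the number of blocks, that it is always a disjoint union of single edges together with exactly two long paths; bipartiteness and grid layered planarity will then follow easily. Write $y=f(x)=y_1\cdots y_m$ with $m=|f(x)|$ even, so $G_x=G_{y_1y_2}\odot G_{y_3y_4}\odot\cdots\odot G_{y_{m-1}y_m}$ has $m/2$ blocks, the $i$-th of them occupying lattice columns $2(i-1)$ and $2i-1$ and rows $0,\dots,5$.

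First I would pin down which block adjacencies can actually occur. By Lemma~\ref{lemma:pair} the pair $11$ never appears, every $01$ is immediately followed by $10$, and every $10$ is immediately preceded by $01$; inspecting the definition of $f$ also shows that the first and last constituent pairs are $00$, that $01$ can only be preceded by $00$ or $10$, and that $10$ can only be followed by $00$ or $01$. Hence the consecutive pairs of blocks appearing in $G_x$ are exactly the five combinations drawn in Figure~\ref{fig:Rules}, so the $\odot$-composition is well defined and it suffices to understand those five gluing rules.

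The inductive statement is the following: after gluing the first $k$ blocks one obtains some single-edge components (one inside each $G_{00}$ block, at rows $2,3$ of its left column, and one inside each $G_{01}$ block, at rows $2,3$ of its right column) together with two vertex-disjoint simple paths $A_k$ and $B_k$, where $A_k$ starts at $(0,0)$ in the first block and ends on the right boundary of block $k$ at a row in $\{0,1\}$, $B_k$ starts at $(0,5)$ and ends on the right boundary of block $k$ at a row in $\{4,5\}$, and the exact exit rows are determined by the type of block $k$ (a $G_{00}$ or $G_{10}$ exits at rows $0$ and $5$, a $G_{01}$ exits at rows $1$ and $4$). The base case is block $1=G_{00}$, which is literally one edge at rows $2,3$ together with the single edge $A_1$ joining $(0,0)$ to $(1,0)$ and the single edge $B_1$ joining $(0,5)$ to $(1,5)$. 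For the inductive step one checks each of the five rules of Figure~\ref{fig:Rules}: in every case the two connection edges attach the (degree-one) ends of $A_k$ and $B_k$, across columns $2k-1$ and $2k$ and with a row change of at most one, to the left ends of the bottom and top strands of block $k+1$, and those strands run through the new block monotonically in the column coordinate; this lengthens $A_k$ and $B_k$ to $A_{k+1}$ and $B_{k+1}$ with exactly the claimed new exit rows, while a $G_{00}$ or $G_{01}$ block also adds one new isolated edge and a $G_{10}$ block adds none. Taking $k=m/2$ and using that the last block is again $G_{00}$, we conclude that $G_x$ is the disjoint union of isolated edges together with the two paths $A=A_{m/2}$ and $B=B_{m/2}$, each running from the first block to the last. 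Since every black vertex of every block lies on $A$, on $B$, or on one of these isolated edges, these are exactly the connected components, which proves the component statement.

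Finally, bipartiteness and grid layered planarity are immediate. A disjoint union of paths and single edges is a forest, hence bipartite. Every block lives on a $2\times 6$ lattice patch with only horizontal and vertical internal edges, and every connection edge joins a vertex in column $2i-1$ to one in column $2i$ while changing the row by at most one; moreover the two connection edges added at each junction lie in the disjoint row bands $\{0,1\}$ and $\{4,5\}$, so no two edges of $G_x$ cross and $G_x$ is a bipartite grid layered planar graph of width $6$ and length $|f(x)|$ (in particular of constant planar cutwidth). The only real work in all of this is the five-way verification in the inductive step, which is a finite and routine check against Figures~\ref{fig:Blocks} and~\ref{fig:Rules}.
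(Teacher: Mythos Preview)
Your proof is correct and follows essentially the same approach as the paper's: induction on the number of blocks, checking the admissible block transitions from Figure~\ref{fig:Rules}, and deducing bipartiteness from acyclicity. Your inductive invariant is slightly more refined (tracking the two long paths and their exit rows explicitly) than the paper's, but the underlying argument is the same.
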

\begin{proof}
By definition, each of the three graph blocks is grid layered planar graphs. Moreover, the operator $\odot$ connects adjacent blocks by preserving planarity and the overall grid structure. Hence $G_x$ is a grid layered planar graph.

To show that each connected component of $G_x$ is a path we will use induction on the number constituent pairs of $y=f(x)$. For the base case note that if $y$ has only one constituent pair then it must be $00$ and $G_{00}$ contains only paths of even length (number of vertices). Now consider a graph $G_p$ corresponding to the first $m-1$ constituent pairs of $y$. Assume that every connected component in $G_p$ is a path. Let $G_{p'}$ be the graph corresponding to the first $m$ constituent pairs of $y$. If the last block of $G_p$ is $G_{00}$ then by Lemma \ref{lemma:pair}, the next block can either be $G_{00}$ or $G_{01}$. By Figure \ref{fig:G00G00} and \ref{fig:G00G01} we have that $G_{p'}$ will only be extending the paths of $G_p$ in addition to two isolated edges. So every connected component in $G_{p'}$ will be a path as well. Similarly if the last block of $G_p$ is $G_{01}$ then again by Lemma \ref{lemma:pair}, the next block will be $G_{10}$ and by Figure \ref{fig:G01G10} we have that every connected component in $G_{p'}$ will be a path as well. Finally if the last block of $G_p$ is $G_{10}$ then the next block can either be $G_{00}$ or $G_{01}$ and by Figure \ref{fig:G10G00} and \ref{fig:G10G01} we have that every connected component in $G_{p'}$ will be a path as well. Also, note that each path in $G_{p'}$ extends from the first block to the last one or is of length one.

This also shows that $G_x$ is bipartite since it does not have any cycles.
\end{proof}

\begin{lemma} \label{lem:equiv}
For every string $x \in \{ 0,1 \}^*$, $G_x$ has a perfect matching if and only if $x$ has even parity.
\end{lemma}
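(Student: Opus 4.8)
The idea is to reduce the statement to counting the lengths of the handful of paths that constitute $G_x$. First I would invoke Lemma~\ref{lem:bggm}: every connected component of $G_x$ is either a single edge or a path that runs from the first block to the last block. Since a graph has a perfect matching iff each of its components does, since a single edge always has one, and since a path has a perfect matching iff it has an even number of vertices, it suffices to (i) pin down exactly which spanning paths occur in $G_x$ and (ii) compute the number of vertices on each of them.

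Second, I would show that $G_x$ consists of exactly two spanning paths -- a ``top wire'' and a ``bottom wire'' -- together with a collection of isolated edges. This is read off the pictures in Figures~\ref{fig:Blocks} and~\ref{fig:Rules}: the first block $G_{00}$ emits exactly two through-strands (the edges $(0,0)$--$(1,0)$ and $(0,5)$--$(1,5)$) plus one isolated edge; by Lemma~\ref{lemma:pair} the only adjacent block pairs that can arise are the five shown in Figure~\ref{fig:Rules}, and in each of them the two new $\odot$-edges join the bottom strand of the left block to the bottom strand of the right block and likewise the top strands, creating no new vertices; each interior $G_{00}$ and each $G_{01}$ contributes one further isolated edge while each $G_{10}$ contributes none; and the final block $G_{00}$ closes off the two strands. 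Equivalently: the only degree-one vertices of $G_x$ are the two strand-endpoints in the first block, the two in the last block, and the two endpoints of each isolated edge, so -- a spanning path needing two degree-one endpoints while the isolated-edge endpoints lie in two-vertex components -- there are exactly two spanning paths.

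Third, I would count the vertices on the bottom wire (the top wire being symmetric, hence of equal length). As the wire passes through a $G_{00}$ or a $G_{01}$ block it uses $2$ vertices, and as it passes through a $G_{10}$ block it uses $3$. It remains to count the $G_{10}$-blocks: since $\bd$ doubles each bit of $x$ and the inserted $0$'s contribute nothing, $f(x)$ contains $2N_1(x)$ ones; by Lemma~\ref{lemma:pair} none of these lie in an $11$ pair, and each $10$ pair is preceded by exactly one $01$ pair (and conversely), so the numbers of $01$- and $10$-pairs are equal, whence there are exactly $N_1(x)$ pairs equal to $10$, i.e.\ exactly $N_1(x)$ blocks $G_{10}$. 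As $f(x)$ has length $4n+4$ it yields $2n+2$ blocks in total, so each wire has
\[
2\big((2n+2)-N_1(x)\big) + 3N_1(x) = 4n+4+N_1(x)
\]
vertices. Therefore $G_x$ has a perfect matching iff both wires have an even number of vertices iff $4n+4+N_1(x)$ is even iff $N_1(x)\equiv 0 \pmod 2$, that is, iff $x$ has even parity.

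\textbf{Main obstacle.} The only genuine work is the bookkeeping in the second paragraph: checking, via the finite case analysis over the five admissible block adjacencies already tabulated in Figure~\ref{fig:Rules}, that the two strands never merge, never gain an extra endpoint in the interior, and terminate precisely at the outer boundary of the first and last blocks -- this is exactly what makes the per-block vertex tally in the third paragraph legitimate. Everything else is immediate once Lemma~\ref{lem:bggm} is available.
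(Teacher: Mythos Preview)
Your proposal is correct and follows essentially the same route as the paper: both arguments use Lemma~\ref{lem:bggm} to reduce to the two long through-paths, observe that a $G_{00}$ or $G_{01}$ block contributes two vertices to each path while a $G_{10}$ block contributes three, and conclude that the path lengths have the same parity as $N_1(x)$. The only cosmetic difference is that the paper tracks the parity inductively block by block, whereas you compute the closed form $4n+4+N_1(x)$ directly; your explicit count of the $G_{10}$-blocks via Lemma~\ref{lemma:pair} is exactly the ``number of $G_{10}$ blocks equals number of ones in $x$'' step the paper invokes at the outset.
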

\begin{proof}
We claim that $G_x$ has a perfect matching if and only if it has an even number of $G_{10}$ blocks. Since the number of $G_{10}$ blocks in $G_x$ is same as the number of ones in $x$, this will complete the proof. To prove our claim we again use induction on the number constituent pairs of $y$. 

For the base case note that $G_{00}$ has a perfect matching using all its three edges. Assume we have a graph $G_p$ corresponding to the first $m-1$ constituent pairs of $y$ such that $G_p$ has a perfect matching if and only if $G_p$ has an even number of $G_{10}$ blocks. Now suppose we are extending the graph $G_p$ by one graph block to get the graph $G_{p'}$. We divide this into two cases.
\begin{description}
\item [Case 1: $G_{p'} = G_p \odot G_{00}$ or $G_{p'} = G_p \odot G_{01}$.] In this case $G_p$ and $G_{p'}$ have the same number of $G_{10}$ blocks. By construction two paths in $G_{p'}$ get extended by two vertices while others remain the same. Also, an additional new edge is introduced whose endpoints are matched with each other (see Figure \ref{fig:Rules}). Thus if $G_p$ has a perfect matching then $G_{p'}$ will also have a perfect matching where the two new vertices by which the paths get extended, are matched with each other. If $G_p$ does not have a perfect matching then at least one of its paths has an odd number of vertices. Extending this path by two more vertices preserves its parity and hence $G_{p'}$ will also not have a perfect matching.

\item [Case 2: $G_{p'} = G_p \odot G_{10}$.] In this case $G_{p'}$ has an extra $G_{10}$ block from $G_{p}$. Two paths in $G_{p'}$ get extended by three vertices while others remain the same (see Figure \ref{fig:G01G10}). Also, note that each path is symmetric about a horizontal axis passing through the centre. Therefore both paths of length more than one has the same length. If $G_p$ has a perfect matching then the new graph $G_{p'}$ does not have perfect matching as the number of vertices in the two paths become odd and hence cannot be matched. On the other hand, if $G_p$ does not have a perfect matching then both the long paths have an odd number of vertices. Hence in $G_{p'}$ these paths will have an even number of vertices and hence a perfect matching exists in $G_{p'}$.
\end{description}
\end{proof}

\begin{lemma} \label{lem:comp}
For every string $x \in \{ 0,1 \}^*$, $G_x$ and its planar embedding is computable in $\AC^0$. 
\end{lemma}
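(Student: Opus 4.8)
The plan is to show that each individual bit (or symbol) of a natural description of $G_x$ together with its embedding is a function of only $O(1)$ bits of $x$, which places the whole construction in $\AC^0$. The key structural fact, which I would state up front, is that the construction is \emph{local}: $G_x$ lives in a strip of $6$ rows and $4n+4$ columns; for $j = 1, \ldots, 2n+2$ the $j$-th block $G_{y_{2j-1}y_{2j}}$ occupies columns $2j-2$ and $2j-1$; the vertex set of block $j$ and the edges internal to it are a fixed function of the block's type $y_{2j-1}y_{2j} \in \{00,01,10\}$ (the pictures in Figure~\ref{fig:Blocks}, translated horizontally by $2j-2$); and the edges joining block $j$ to block $j+1$ are a fixed function of the ordered pair of types of blocks $j$ and $j+1$ (Figure~\ref{fig:Rules}). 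I would fix the output encoding accordingly: one bit for each of the $6(4n+4)$ lattice points of the strip telling whether it is a vertex, and one bit for each of the $O(n)$ potential edges (two lattice points in the same column and consecutive rows, or in consecutive columns) telling whether that edge is present; the coordinates of vertices and of edge endpoints are then implicit in the index of the bit, so this also encodes the embedding.

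Next I would check that the type of block $j$ is essentially a projection of $x$. Unwinding $f(x) = 0\,\bd(0x_10x_2\cdots 0x_n0)\,0$: the inner word has $0$ in every odd position and $x_i$ in position $2i$; bit-doubling sends position $2i$ to the pair of positions $\{4i-1,4i\}$; and prepending and appending a $0$ shifts everything by one. Carrying this out shows that blocks $1$ and $2n+2$ have type $00$, block $2i$ has type $0x_i$ (so $00$ or $01$), and block $2i+1$ has type $x_i0$ (so $00$ or $10$), for $1 \le i \le n$ --- precisely the pattern in the worked example following the construction. Hence the type of block $j$ depends on at most the single bit $x_{\lfloor j/2\rfloor}$, and the index $\lfloor j/2\rfloor$ is a trivial function of $j$, so the circuit family obtained below is uniform.

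Given this, the circuit is immediate. For the output bit ``is $(c,r)$ a vertex?'', the block index $j = \lfloor c/2\rfloor + 1$ is a constant hardwired into that output gate; the gate reads the single relevant input bit $x_{\lfloor j/2\rfloor}$ (if the block is an internal one), determines the block type, and outputs a constant-size table lookup of whether $(c \bmod 2,\, r)$ is one of the six vertices of that type. For the output bit ``is this potential edge present?'', the edge either lies inside one block or straddles two consecutive blocks; in either case the gate reads the $O(1)$ input bits governing those one or two blocks and outputs the corresponding entry of the constant-size table read off from Figures~\ref{fig:Blocks} and~\ref{fig:Rules}. Every output bit is thus computed in depth $O(1)$ from $O(1)$ input bits, so $G_x$ with its embedding is $\AC^0$-computable; together with Lemmas~\ref{lem:bggm} and~\ref{lem:equiv} this completes the $\AC^0$ many-one reduction from \Parity{} to \BGGM{}.

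I do not expect a genuine obstacle here: the statement is an instance of the principle that output maps of bounded locality lie in $\AC^0$. The only points that need care are pinning down the output encoding precisely enough that ``computable in $\AC^0$'' has a definite meaning, verifying the exact bit-level dependence of the constituent pairs on $x$ (the short calculation in the second paragraph), and observing that the handful of index-arithmetic operations used to wire up the circuit --- floors and divisions by $2$ --- are harmless for uniformity.
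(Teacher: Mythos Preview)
Your proposal is correct and takes essentially the same approach as the paper: both argue that each output bit describing a vertex or edge of $G_x$ (and its position in the embedding) depends on only $O(1)$ bits of the input, hence the map is in $\AC^0$. The paper phrases this as a two-step argument --- first observing that each vertex/edge depends on at most two/four bits of $f(x)$, then noting that every bit of $f(x)$ is either a hardwired $0$ or a direct copy of some $x_i$ --- whereas you unwind $f$ explicitly to pin down which single bit $x_{\lfloor j/2\rfloor}$ governs block $j$; your version is more detailed but the underlying locality argument is identical.
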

\begin{proof}
It is easy to see that $G_x$ can be computed using $\AC^0$ circuits when $f(x)$ is given as input as each vertex depends on at most two bits of $f(x)$ and each edge depends on at most four bits of $f(x)$. Also the embedding is $\AC^0$-computable as we can compute whether $(a,b)$ is a vertex using just two bits of $f(x)$ and whether $(a,b)$ and $(c,d)$ have edge between them using just four bits of $f(x)$. Next we show that $G_x$ and its planar embedding is $\AC^0$-computable even when $x$ is given as input. For this note that each bit of $f(x)$ is either $0$ or a copy of some bit in $x$. In circuit taking $y$ as input, we can hardcode some bits to $0$ and pass input from bits of $x$ wherever copy bit of some bit in $x$ are used. Thus, we get a circuit which computes $G_x$ and its planar embedding using $x$ as input.  
\end{proof}

\begin{theorem} \label{thm:red}
$\Parity$ reduces to $\BGGM$ in $\AC^0$. 
\end{theorem}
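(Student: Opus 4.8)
The plan is to assemble the three preceding lemmas into the desired $\AC^0$ many-one reduction. Given an instance $x \in \{0,1\}^*$ of $\Parity$, the reduction outputs the graph $G_x$ together with its grid layered planar embedding. By Lemma~\ref{lem:bggm}, $G_x$ is always a bipartite grid layered planar graph, so $(G_x,\text{embedding})$ is a syntactically legal instance of $\BGGM$; by Lemma~\ref{lem:comp}, the map $x \mapsto (G_x,\text{embedding})$ is computable by a family of $\AC^0$ circuits (crucially, including the embedding, since $\BGGM$ is defined as a set of graph--embedding pairs); and by Lemma~\ref{lem:equiv}, $G_x$ has a perfect matching exactly when $x$ has even parity.

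The one point that needs care is the direction of the equivalence: $\Parity$ consists of strings of \emph{odd} weight, whereas $G_x \in \BGGM$ iff $x$ has \emph{even} weight, so $x \mapsto (G_x,\text{embedding})$ is literally a reduction from the complement $\overline{\Parity}$ rather than from $\Parity$. To obtain a reduction from $\Parity$ itself I would precompose with the trivial $\AC^0$ map $x \mapsto 1x$ that prepends a single $1$ bit, which flips the weight's parity; then $x \in \Parity$ iff $1x$ has even weight iff $G_{1x} \in \BGGM$. Since a composition of two $\AC^0$-computable functions is $\AC^0$-computable, $x \mapsto (G_{1x},\text{embedding})$ is the desired $\AC^0$ reduction. (One could alternatively invoke closure of $\AC^0$ under complementation, but phrasing a reduction to $\overline{\BGGM}$ requires describing the complement of an instance language, so the bit-prepending route is cleaner.)

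There is essentially no hard step left: all of the combinatorial and constructional content already sits in Lemmas~\ref{lem:bggm}--\ref{lem:comp}, and what remains is a short composition argument that checks (i) well-formedness of the output instance, (ii) $\AC^0$-computability of the composed map, and (iii) correctness of the yes/no correspondence after the parity flip. So the write-up should be a few lines invoking the three lemmas, with the only subtlety being the parity-direction fix described above.
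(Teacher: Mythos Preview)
Your proposal is correct and follows exactly the paper's approach: the paper's entire proof is the single sentence ``Theorem~\ref{thm:red} follows from Lemmas~\ref{lem:bggm}, \ref{lem:equiv} and~\ref{lem:comp}.'' You are in fact more careful than the paper, since you notice and fix the parity-direction mismatch (Lemma~\ref{lem:equiv} characterizes \emph{even} parity while $\Parity$ is defined as odd weight) via the $x\mapsto 1x$ preprocessing; the paper silently ignores this, which is harmless for its downstream application because $\AC^0[p^\alpha]$ is closed under complement, but your version matches the stated theorem precisely.
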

Theorem \ref{thm:red} follows from Lemmas \ref{lem:bggm}, \ref{lem:equiv} and \ref{lem:comp}. Razborov and Smolensky had independently shown the following result.
\begin{theorem} \label{thm:razsmo}
\cite{Raz89, Smo87}
Let $p$ and $q$ be two distinct prime numbers and $\alpha \in \mathbb{N}$. $\Mod_q \notin {\AC}^0[p^\alpha]$.
\end{theorem}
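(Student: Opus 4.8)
This is the Razborov--Smolensky lower bound, and I would establish it by the polynomial (approximation) method. The plan has two halves that are then played against each other: (i) every $\AC^0[p^{\alpha}]$ circuit is approximated, over a suitable finite field of characteristic $p$, by a polynomial of only polylogarithmic degree; and (ii) the function $\Mod_q$ is \emph{not} so approximable -- any polynomial over a field of characteristic $p$ that agrees with $\Mod_q$ on even a $\tfrac34$ fraction of inputs must have degree $\Omega(\sqrt n)$. Since $\operatorname{polylog}(n)=o(\sqrt n)$, combining (i) and (ii) with the assumption $\Mod_q\in\AC^0[p^{\alpha}]$ yields a contradiction.

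For half (i): first observe $\AC^0[p^{\alpha}]=\AC^0[p]$, since a $\Mod_{p^{\alpha}}$ gate is computable by a constant-depth polynomial-size $\Mod_p$-circuit and conversely, so it is enough to approximate $\AC^0[p]$ circuits. I would work over $\mathbb{F}_{p^{\ell}}$ with $\ell$ chosen so that $q\mid p^{\ell}-1$ (possible because $p\neq q$), so that $\mathbb{F}_{p^{\ell}}$ contains a primitive $q$-th root of unity $\omega$. Then $\NOT$ is linear; a $\Mod_p$ gate on inputs $y_1,\dots,y_m$ is computed \emph{exactly} by $\bigl(\sum_i y_i\bigr)^{p-1}$ by Fermat, of degree $p-1$; and an $\AND$ or $\OR$ gate is approximated, on any fixed input with failure probability at most $\varepsilon$, by a random polynomial of degree $O_p(\log(1/\varepsilon))$ built from random $\mathbb{F}_p$-linear forms raised to the power $p-1$, with the failure probability driven down by taking a product of $O(\log(1/\varepsilon))$ independent copies. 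Setting $\varepsilon=1/(4s)$ for a size-$s$ depth-$d$ circuit, a union bound over gates plus averaging over the random choices produces a single polynomial of degree $(c\log s)^{d}$ that agrees with the circuit on at least three quarters of all inputs; for $s=\operatorname{poly}(n)$, $d=O(1)$ this degree is $\operatorname{polylog}(n)$.

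For half (ii): assume $\Mod_q$ on $n$ bits is in $\AC^0[p]$; then so is each indicator $x\mapsto[\,\sum_i x_i\equiv j \pmod q\,]$ (hardwire $q-j$ extra ones), of comparable size and depth, so by half (i) the $\mathbb{F}_{p^{\ell}}$-valued map $\chi(x)=\omega^{\sum_i x_i}=\prod_i\bigl(1+(\omega-1)x_i\bigr)$ agrees with some polynomial $\tilde\chi$ of degree $t=\operatorname{polylog}(n)$ on a set $A$ with $|A|\ge(1-\delta)2^n$ for a small constant $\delta$. Now I would run the dimension argument: put $z_i=1+(\omega-1)x_i$, which is invertible with $z_i^{-1}=1+(\omega^{-1}-1)x_i$, and note $\prod_{i\in[n]}z_i=\chi=\tilde\chi$ on $A$. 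For $|S|>n/2$, writing $\prod_{i\in S}z_i=\bigl(\prod_{i\in[n]}z_i\bigr)\prod_{i\notin S}z_i^{-1}$ shows that on $A$ this monomial equals a polynomial of degree at most $t+(n-|S|)<n/2+t$; expanding the ordinary monomials $\prod_{i\in S}x_i$ over the $z_i$, it follows that \emph{every} function $A\to\mathbb{F}_{p^{\ell}}$ coincides on $A$ with a polynomial of degree below $n/2+t$. Counting dimensions gives $|A|\le\sum_{i<n/2+t}\binom{n}{i}=(\tfrac12+o(1))2^n$ once $t=o(\sqrt n)$, contradicting $|A|\ge(1-\delta)2^n$; hence $t=\Omega(\sqrt n)$, contradicting half (i).

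I expect the main obstacles to be the two engine rooms of the argument: in half (i), carrying out the probabilistic-polynomial construction so that error reduction and the union bound over all gates still leave the total degree polylogarithmic, together with the (standard but not self-evident) identity $\AC^0[p^{\alpha}]=\AC^0[p]$; and in half (ii), choosing the right extension field and the substitution $z_i=\omega^{x_i}$ so that the ``wrap-around'' of the product over all coordinates converts high-degree monomials into low-degree ones and the dimension count yields the sharp $\Omega(\sqrt n)$ threshold. The remainder is bookkeeping, and the case $q=2$ -- which is all that is needed for the application to $\Parity$ with $p$ odd -- is the familiar instance in which $\omega=-1$ and $z_i=1-2x_i$.
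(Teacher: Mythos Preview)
The paper does not prove this theorem at all: it is stated with the attribution \cite{Raz89, Smo87} and used as a black box in the derivations of Theorems~\ref{thm:final1} and~\ref{thm:final2}. Your proposal, by contrast, supplies the full Razborov--Smolensky polynomial-method argument, and the outline you give is the standard one and is correct in its essentials: the probabilistic low-degree approximation of $\AC^0[p]$ gates over an extension of $\mathbb{F}_p$, the reduction $\AC^0[p^\alpha]=\AC^0[p]$, the passage from $\Mod_q$ to the character $\chi(x)=\prod_i(1+(\omega-1)x_i)$ via a primitive $q$-th root of unity in $\mathbb{F}_{p^\ell}$, and the Smolensky dimension-count using the invertible substitution $z_i=1+(\omega-1)x_i$ to fold high-degree monomials down below degree $n/2+t$ on the good set. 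The only places that would need care in a full write-up are exactly the ones you flag --- the union bound/degree bookkeeping in half~(i) and the field choice/wrap-around in half~(ii) --- but none of this is a gap. In short, you have reproduced the classical proof that the paper simply cites; for the purposes of this paper nothing more than the citation is required.
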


Now by combining the result of Razborov and Smolensky and applying Theorem \ref{thm:red} we have the following result.
\begin{theorem} \label{thm:final1}
$\BGGM$ is not in $\AC^0[p^{\alpha}]$ for every odd prime $p$ and $\alpha \in \mathbb{N}$.
\end{theorem}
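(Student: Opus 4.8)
The plan is to obtain Theorem~\ref{thm:final1} by composing the $\AC^0$ reduction of Theorem~\ref{thm:red} with the Razborov--Smolensky lower bound of Theorem~\ref{thm:razsmo}, specialized to the prime $q=2$. First I would record the trivial but essential observation that $\Parity$ is exactly $\Mod_2$; hence, for any odd prime $p$ and any $\alpha\in\mathbb{N}$, Theorem~\ref{thm:razsmo} applied with $q=2$ gives $\Parity\notin\AC^0[p^\alpha]$.

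Next I would argue by contradiction. Suppose $\BGGM\in\AC^0[p^\alpha]$ for some odd prime $p$ and some $\alpha$, witnessed by a family of constant-depth polynomial-size circuits with $\AND$, $\OR$, $\NOT$ and $\Mod_{p^\alpha}$ gates. By Lemma~\ref{lem:comp} the map $x\mapsto G_x$ (together with its planar embedding, in the input encoding expected by the $\BGGM$ solver) is computed by a family of $\AC^0$ circuits; by Lemmas~\ref{lem:bggm} and~\ref{lem:equiv}, $G_x$ is a bipartite grid layered planar graph and $G_x\in\BGGM$ if and only if $x$ has even parity, i.e.\ if and only if $x\notin\Parity$. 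Composing the $\AC^0$ circuit family computing $x\mapsto G_x$ with the assumed $\AC^0[p^\alpha]$ circuit family for $\BGGM$ yields a single circuit family of constant depth (the two constant depths add) and polynomial size (the size bounds compose to a polynomial, since the reduction blows up the instance only polynomially: $6$ rows and $O(|x|)$ columns) using only $\AND$, $\OR$, $\NOT$ and $\Mod_{p^\alpha}$ gates. This family decides $\overline{\Parity}$; since $\AC^0[p^\alpha]$ is closed under complementation (negate the output gate), we get $\Parity\in\AC^0[p^\alpha]$, contradicting the previous paragraph. Hence no such $p,\alpha$ exist, which is the statement of Theorem~\ref{thm:final1}.

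I do not expect a genuine obstacle here: the only point requiring care is the bookkeeping of the closure of $\AC^0[p^\alpha]$ under $\AC^0$ many-one reductions, i.e.\ checking that the output format produced by the reduction of Theorem~\ref{thm:red} (the gadget graph $G_x$ with its embedding) is precisely the input format on which the hypothetical $\BGGM$ circuit operates, and that composing the two families keeps both depth constant and size polynomial. Everything else is immediate from Theorems~\ref{thm:red} and~\ref{thm:razsmo}. (The case $p=2$ is deliberately excluded here, as Parity itself is in $\AC^0[2]$; it is handled separately in Section~\ref{sec:bggmlowerbound} via the monoid word reduction.)
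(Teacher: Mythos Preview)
Your proposal is correct and follows exactly the paper's approach: the paper's proof of Theorem~\ref{thm:final1} is just the one-line remark that it follows by combining the $\AC^0$ reduction of Theorem~\ref{thm:red} with the Razborov--Smolensky bound of Theorem~\ref{thm:razsmo}, and your write-up is a faithful (and more detailed) expansion of that argument, including the observation $\Parity=\Mod_2$ and the closure of $\AC^0[p^\alpha]$ under complement needed because $G_x\in\BGGM$ iff $x$ has \emph{even} parity.
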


\section{Lower Bounds for {\BGGM}}
\label{sec:bggmlowerbound}

Now we will show that $\BGGM$ is not in $\AC^{0}[2^{\alpha}]$ as well. For this, we will first describe reduction given in \cite{Hansen2014} that shows $\BGGM$ is in $\ACC^0$. 

\subsection{Reduction of BGGM to Monoid Word Problem}
\label{sec:monoid}
For two relations $R, S \subseteq \mathcal{A} \times \mathcal{B}$. We define $RS = \{(x,y) \mid \exists z \text{ such that } (x,z) \in R \text{ and } (z,y) \in S \}$. For any grid layered planar graph $G$, we will first define the monoid element corresponding to $G$. We will denote this monoid element by $G^{\mathcal{M}}$. For any $G$, $G^{\mathcal{M}} = (X,Y,R)$ where $X$ is set of vertices in leftmost layer of $G$, $Y$ is set of vertices in rightmost layer of $G$ and $R \subseteq 2^{X} \times 2^{Y}$ is a relation. Let $X' \subseteq X$ and $Y' \subseteq Y$ then $(X',Y') \in R$ if and only if $G \setminus ( \overline{X'} \cup Y')$ has a perfect matching.  In other words, there is a matching in $G$ which matches every vertex except those in $\overline{X'}$ and $Y'$.
We now define the monoid as the set
\[\mathcal{M} = \{ G^{\mathcal{M}}|\text{  }G\text{ is a bipartite grid layered planar graph} \} \cup \{0,1\}\] 
where $1$ is the identity element of the monoid and the operation of the monoid (denoted as $*$) is defined as
\[(W,X,R)*(Y,Z,S)  = 
\begin{cases}
     (W,Z,R S) &\text{if } X = Y\\
    0 & \text{if } X \ne Y
\end{cases}
\]
and for all $M \in \mathcal{M}$, $0*M = M*0 = 0$. For the remaining part of Section \ref{sec:bggmlowerbound} we will refer to this monoid as $\mathcal{M}$.

Next, we define a concatenation operation on grid layered planar graphs. Let $G_1$ and $G_2$ be two grid layered planar graphs having same width $w$ and lengths $l_1$ and $l_2$ respectively. We define $G_1 \cdot G_2$ to be the grid layered planar graph having width $w$ and length $l_1 + l_2$, obtained by identifying the vertices in the rightmost column of $G_1$ and the leftmost column of $G_2$. Here we assume that there are no vertical edges present in the leftmost or rightmost column of a grid layered planar graph. This can be assumed without loss of generality because given any grid layered planar graph we can convert it to a grid layered planar graph having the above property by adding additional columns to the left and right, and adding edges appropriately such that it preserves matching. Then we have the property that $(G_1 \cdot G_2)^{\mathcal{M}} = {G_1}^{\mathcal{M}}*{G_2}^{\mathcal{M}}$. 

\subsection{$\BGGM$ is not in $\AC^0[2^\alpha]$}
In this section we show that {\BGGM} is also not in $\AC^0[2^\alpha]$. First we will show an algebraic property of the monoid $\mathcal{M}$ defined in Section \ref{sec:monoid}.
\begin{lemma}\label{lemma:GroupP}
There exists a cyclic group $\mathcal{G}$ in $\mathcal{M}$ of order $p$ where $p$ is an odd prime.
\end{lemma}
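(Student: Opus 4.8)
The plan is to construct an explicit family of bipartite grid layered planar graphs whose associated monoid elements, under the concatenation operation, realize a cyclic group of order $p$ for some odd prime $p$. The key is to exploit the fact, established in Lemma~\ref{lem:equiv}, that attaching a $G_{10}$ block toggles whether a perfect matching exists: more generally, within the framework of Section~\ref{sec:monoid}, a single grid layered planar ``gadget'' column acts on the set of possible matching-deficiency patterns $R \subseteq 2^X \times 2^Y$ in a way that can be made to permute a set of $p$ states cyclically. First I would isolate a gadget graph $H$ of constant width whose monoid element $H^{\mathcal{M}} = (X, X, R)$ has $X = Y$ (so that powers $H^{\mathcal{M}}, (H^{\mathcal{M}})^2, \ldots$ stay in the ``multipliable'' part of $\mathcal{M}$ and never collapse to $0$), and such that $R$, viewed as a relation on $2^X$, is the graph of a cyclic permutation of order $p$ on a distinguished $p$-element subset of states, acting as identity (or as a ``dead'' sink) elsewhere.

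Concretely, I would build $H$ so that the leftmost and rightmost columns each carry a single ``free'' vertex that may or may not be matched outward, and arrange the internal structure as a path-like gadget — in the spirit of the $G_{10}$ block — so that the number of internal vertices on the unmatched path increments by a fixed amount each time a copy of $H$ is appended, with the parity/residue of that count mod $p$ being exactly the state. Appending $p$ copies of $H$ returns the residue to its original value, i.e. $(H^{\mathcal{M}})^p = H^{\mathcal{M}}$ restricted appropriately — more precisely, I want $(H^{\mathcal{M}})^{p+1} = H^{\mathcal{M}}$, or better, an idempotent $e = (H^{\mathcal{M}})^k$ such that $e * H^{\mathcal{M}}$ generates a cyclic group of order $p$ with identity $e$ inside the submonoid $e\mathcal{M}e$. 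Then I would verify the group axioms: closure and associativity are inherited from $\mathcal{M}$; the element $e$ is the identity of the group by idempotence; and invertibility follows because $e * (H^{\mathcal{M}})^{p-1} * \cdots$ cycles back. The order is exactly $p$ provided the $p$ intermediate states are genuinely distinct, which is guaranteed by choosing the increment coprime to $p$ (e.g. increment $1$).

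The main obstacle I anticipate is designing the gadget $H$ so that three things hold \emph{simultaneously}: (i) $H$ is a genuine bipartite grid layered planar graph of bounded width with no vertical edges in its extreme columns, so that concatenation is well-defined and $(H_1 \cdot H_2)^{\mathcal{M}} = H_1^{\mathcal{M}} * H_2^{\mathcal{M}}$; (ii) the relation $R$ does not ``leak'' — i.e. appending copies of $H$ must not create spurious matchings or destroy all matchings in a way that sends the product to $0$ or to a non-invertible element, which forces careful bookkeeping of which boundary vertices are exposed; and (iii) the permutation induced on states is exactly a $p$-cycle, not a product of shorter cycles. I expect to handle (iii) by making the ``counter'' literally a path whose length mod $p$ is read off at the right boundary, similar to how $G_{10}$ realizes a $2$-cycle in Section~\ref{sec:bggm2parity}; generalizing the $G_{10}$ construction to advance by a controlled amount and to have $p$ distinguishable boundary configurations is the crux. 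Once $H$ is in hand, identifying the idempotent power $e$ and checking $\{\,e, e*H^{\mathcal{M}}, \ldots, e*(H^{\mathcal{M}})^{p-1}\,\}$ forms $\mathbb{Z}/p\mathbb{Z}$ is a routine finite computation.
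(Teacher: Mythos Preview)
Your proposal takes a constructive route that is quite different from the paper's argument, and the crux of your plan --- actually exhibiting the gadget $H$ --- is left undone in a way that constitutes a genuine gap.

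The paper's proof is short and non-constructive. It uses three ingredients you do not invoke: (a) Theorem~\ref{thm:red} (Parity $\AC^0$-reduces to $\BGGM$), which together with Razborov--Smolensky shows $\BGGM \notin \AC^0$; (b) the Barrington--Th\'erien characterization (Theorem~\ref{thm:barring_therien}), which then forces $\mathcal{M}$ to be non-aperiodic, i.e.\ to contain a nontrivial group $\mathcal{G}_n$; and (c) the result from Hansen et al.\ \cite{Hansen2014} that every group in $\mathcal{M}$ has \emph{odd} order. An odd prime $p$ dividing $|\mathcal{G}_n|$ and Cauchy's theorem then yield the cyclic subgroup of order $p$. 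No explicit gadget is ever built.

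Your approach, by contrast, tries to exhibit $H$ directly. The central difficulty you flag is real, and the analogy you lean on is misleading: the block $G_{10}$ toggles a $2$-state parity, but the Hansen et al.\ result says precisely that no group of even order sits inside $\mathcal{M}$. So whatever order-$2$ permutation behaviour $G_{10}$ exhibits on boundary states cannot be embedded in a group of $\mathcal{M}$; it lives in the non-group part of the monoid. Generalizing ``in the spirit of $G_{10}$'' to produce an odd-$p$ cycle therefore cannot be a simple length-counting extension --- you would need a fundamentally different gadget whose boundary relation $R$ is an honest permutation (not merely a relation) on a $p$-element subset of $2^X$, and you give no construction of such an $H$. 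Verifying (ii) in your list --- that iterated concatenation never drops to $0$ or to a non-invertible element --- amounts to showing $R$ is a bijection on its support, which is exactly the hard part. Until $H$ is actually produced and its relation computed, the argument is a plan rather than a proof.
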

\begin{proof}
Since by Theorem \ref{thm:red} $\BGGM$ is not in  $\AC^0$, hence $\mathcal{M}$ is not an aperiodic monoid by the characterisation provided in \cite{Barrington-Therien}. Thus a group $\mathcal{G}_{n} \subseteq \mathcal{M}$ and $\mid \mathcal{G}_{n} \mid = n >1$ exists. Hansen et al. showed that every group contained in $\mathcal{M}$ will have odd order \cite{Hansen2014}. Thus, some odd prime $p | ord(\mathcal{G}_{n})$. Thus there exist a cyclic subgroup of $\mathcal{M}_{n}$, $\mathcal{G}$, of order $p$.
\end{proof}

\begin{lemma}\label{lemma:SameLen}
Let $\mathcal{G}$ be a cyclic group of order $p$ in $\mathcal{M}$ and $e$ is the identity of $\mathcal{G}$. Then for some generator of $\mathcal{G}$, say $x$, there exists grid layered planar graphs $A$ and $B$ such that $A^{\mathcal{M}} = x$, $B^{\mathcal{M}} = e$ and $A$ and $B$ have same length.
\end{lemma}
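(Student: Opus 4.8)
The plan is to realize both the identity $e$ and some generator of $\mathcal{G}$ by concatenations of a small fixed family of grid layered planar graphs, equalizing the two resulting lengths by padding with an \emph{identity gadget}; to set this up I would first pin down the algebra. Fix any generator $g$ of $\mathcal{G}$. From $e * e = e$ together with $e * g = g = g * e$ and the definition of the monoid operation, every element of $\mathcal{G}$ is a triple of the form $(U,U,\cdot)$ for one common vertex set $U$: the operation forces the right vertex set of the first factor to equal the left vertex set of the second, and neutrality of $e$ then collapses all of these to a single $U$. Since $0$ is absorbing (hence not in the nontrivial group $\mathcal{G}$) and $1$ is the unique invertible element of $\mathcal{M}$ while $|\mathcal{G}| = p > 1$, neither $e$ nor $g$ is $0$ or $1$, so each equals $H^{\mathcal{M}}$ for an actual bipartite grid layered planar graph. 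After the end-column normalization of Section~\ref{sec:monoid} we get graphs $A'$ with $A'^{\mathcal{M}} = g$ of length $a$ and $B'$ with $B'^{\mathcal{M}} = e$ of length $b$, both of the same width $w$ and with leftmost column $=$ rightmost column $= U$. By $(G_1 \cdot G_2)^{\mathcal{M}} = G_1^{\mathcal{M}} * G_2^{\mathcal{M}}$, the power $(A')^i$ realizes $g^i$ and $(B')^r$ realizes $e$.

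Next I would construct the identity gadget. Let $W$ be the width-$w$ graph with exactly two columns, vertex set $U$ in each, and one horizontal edge joining the two copies of every $u \in U$; this is a bipartite grid layered planar graph with no vertical end-column edges. A direct check gives $W^{\mathcal{M}} = (U,U,C)$ with $C = \{(S, U \setminus S) : S \subseteq U\}$, and $C$ composed with itself is the identity relation on $2^U$; hence $J_k := W^{2k}$ satisfies $J_k^{\mathcal{M}} = (U,U,\mathrm{Id})$, which is a two-sided identity for every monoid element whose left and right vertex sets are both $U$ --- in particular for $g$, every power $g^i$, and $e$. I would then choose $k$ with $p \nmid |J_k|$ (possible because the lengths $|J_k|$, $k \geq 1$, are not all divisible by $p$), and set $J := J_k$, $\ell := |J|$; since $p$ is prime this gives $\gcd(\ell, p) = 1$.

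Finally I would assemble the two graphs. Put $x := g^{\ell}$; as $\gcd(\ell, p) = 1$ and $\mathcal{G}$ is cyclic of order $p$, $x$ generates $\mathcal{G}$. Define $A := (A')^{\ell} \cdot J^{\alpha}$ and $B := (B')^{\ell} \cdot J^{\beta}$, with $\alpha := \max(0, b - a)$ and $\beta := \max(0, a - b)$; both are bipartite grid layered planar graphs (concatenations of such, using the normalization to make the seams compatible). The $J$-factors act as identities and drop out, so $A^{\mathcal{M}} = g^{\ell} = x$ and $B^{\mathcal{M}} = e^{\ell} = e$; and $|A| = \ell(a + \alpha) = \ell \max(a, b) = \ell(b + \beta) = |B|$, which is exactly what the statement asks for.

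I expect the crux to be the identity gadget. Without it, the naive least-common-multiple padding using only powers of $A'$ and $B'$ breaks down precisely when $p \mid b/\gcd(a,b)$: in that regime every length realizing a generator has $p$-adic valuation equal to $v_p(a)$, whereas every length realizing $e$ has strictly larger $p$-adic valuation, so no concatenation of copies of $A'$ and $B'$ alone can ever have the same length. The relation identity $C \circ C = \mathrm{Id}$ is what lets the ``wire bundle'' $W$ serve as an identity element of many different lengths --- in particular of some length coprime to $p$ --- which is what rescues the padding; checking that $W$ is bipartite, grid layered planar, and has monoid value $(U,U,C)$, and that the concatenations above remain bipartite grid layered planar, is routine but is the one spot needing genuine care.
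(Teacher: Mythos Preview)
Your identity gadget $J_k = W^{2k}$ with $J_k^{\mathcal{M}} = (U, U, \mathrm{Id})$ is correct and is exactly the ``even-length horizontal paths'' the paper uses for padding in its Case~1; the algebra you do (every element of $\mathcal{G}$ has shape $(U,U,\cdot)$, neither $0$ nor $1$ lies in $\mathcal{G}$, $C\circ C=\mathrm{Id}$) is also right. So the mechanism is the same as the paper's.

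The gap is in your final length bookkeeping. Concatenation identifies one column, so $|G_1 \cdot G_2| = |G_1| + |G_2| - 1$, not $|G_1| + |G_2|$; hence $|(A')^{\ell}| = \ell(a-1)+1$ and each appended copy of $J_k$ adds only $|J_k|-1 = 2k$, an \emph{even} increment. Substituting your $\alpha=\max(0,b-a)$, $\beta=\max(0,a-b)$ then gives $|A|-|B| = a-b$, not $0$. More structurally, your exponent $\ell = |J_k| = 2k+1$ is odd, so $|(A')^{\ell}|$ has the same parity as $a$ and $|(B')^{\ell}|$ has the same parity as $b$; since the only available padding changes length by even amounts, the case $a\not\equiv b \pmod 2$ is not handled by your construction. (Your closing paragraph diagnoses a $p$-adic obstruction, but the actual obstruction here is $2$-adic.)

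The paper's proof is just a parity fix layered on top of the same gadget. If $|A'|$ and $|B'|$ have the same parity, pad the shorter with even-length horizontal paths (your $J_k$) --- this is Case~1. If not, square one of them: $|G\cdot G| = 2|G|-1$ flips parity, while on the monoid side $e^2=e$ and, because $p$ is odd, $g^2$ is still a generator --- this is Cases~2 and~3, reducing to Case~1. Your scheme is repaired by the same move: take the exponent to be $2$ (or any even number, automatically coprime to the odd prime $p$) rather than $|J_k|$; then both $|(A')^{2}|=2a-1$ and $|(B')^{2}|=2b-1$ are odd, and $|a-b|$ copies of $J_1$ on the shorter side equalize the lengths.
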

\begin{proof}
We have $x, e \in \mathcal{M}$ such that $x \neq e$. Moreover it is easy to see that the elements $0$ and $1$ of  $\mathcal{M}$ are not contained in $\mathcal{G}$. Hence there exists grid layered planar graphs $A'$ and $B'$ such that $A^{\mathcal{M}} = x$ and $B^{\mathcal{M}} = e$. If $A$ and $B$ have same length we are done. Otherwise using $A$ and $B$ we will give the construction of grid layered planar graphs $A'$ and $B'$  such that lengths of $A'$ and $B'$ are same, $A'^{\mathcal{M}} = y$ and $B'^{\mathcal{M}} = e$ where $y$ is a generator of $\mathcal{G}$. Consider the following cases:
\begin{description}
\item [Case 1: Difference between the lengths of $A$ and $B$ is even.] Without loss of generality assume $A$ has smaller length. We construct $A^{\prime}$ by adding an even number of columns to the right hand side of $A$ such that $A'$ and $B$ have the same number of columns. Now we add horizontal paths of even length from each vertex in the rightmost column of $A$ to its corresponding vertex in the rightmost column of $A'$. Since we have added paths of even length, therefore there there is a one to one correspondence between matchings in $A$ and $A'$. Hence $A'^{\mathcal{M}} = A^{\mathcal{M}}$.
\item [Case 2: $A$ has odd length and $B$ has even length.] As $B^{\mathcal{M}} =e$ we have $(B \cdot B)^{\mathcal{M}}=e^2=e$. Also $B \cdot B$ will have odd length. Hence it reduces to Case 1.
\item [Case 3: $A$ has even length and $B$ has odd length.] Note that if $A^{\mathcal{M}}=x$ has order $p$ then $(A \cdot A)^{\mathcal{M}} = x^2$ also have order $p$. Now $A \cdot A$ has odd length. Hence it reduces to Case 1 again.
\end{description}
\end{proof}
Consider the grid layered planar graphs $A$ and $B$ as obtained by Lemma \ref{lemma:SameLen}. Using them we define a function $h$ from the set of all binary strings to the set of all grid layered planar graphs. $h$ is defined recursively as follows:
\begin{align*}
h(\epsilon) &= B \\
h(y0) &= h(y) \cdot B \\
h(y1) &= h(y) \cdot A 
\end{align*}
Note that $h(z)$ is essentially the grid layered planar graph obtained by concatenating copies of $A$ and $B$ for every $1$ and $0$ in the string $z$ respectively, together with an extra $B$ for $\epsilon$ at the leftmost end.  
\begin{lemma}\label{lemma:AC0Comp}
$h$ is $\AC^0$-computable.
\end{lemma}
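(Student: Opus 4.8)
The plan is to follow the strategy of the proof of Lemma~\ref{lem:comp}, exploiting that $h(z)$ has an extremely rigid block structure, while being careful about one point where a naive imitation of that proof would fail. Recall that for $z = z_1z_2\cdots z_n$ the graph $h(z)$ is the concatenation $B \cdot C_1 \cdot C_2 \cdots C_n$, where $C_i = A$ if $z_i = 1$ and $C_i = B$ if $z_i = 0$, and where $A$ and $B$ are \emph{fixed} grid layered planar graphs, independent of $n$, of a common length $\ell$ and common width $w$ (the equal lengths being supplied by Lemma~\ref{lemma:SameLen}). Thus $h(z)$ is just $n+1$ constant-size blocks placed side by side: the zeroth block is a copy of $B$, and for $1 \le i \le n$ the $i$-th block is a copy of $A$ or of $B$ according to the single bit $z_i$. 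Consecutive blocks share one column, so $h(z)$ has $L = (n+1)(\ell-1)+1$ columns and $w$ rows; after the routine normalization of Section~\ref{sec:monoid} we may assume each boundary column of $A$ and of $B$ carries the full set of $w$ row-vertices and no vertical edge, so the gluing in the definition of $\cdot$ is unambiguous and every vertex of $h(z)$ lies in a well-defined block.

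The key step is to index the vertices of $h(z)$ by triples $(i,j,r)$, with $0 \le i \le n$ the block index, $0 \le j \le \ell-1$ the column within the block, and $0 \le r \le w-1$ the row, rather than by a single flat coordinate. With this encoding the output is produced bit by bit. Whether $(i,j,r)$ is a vertex of $h(z)$, and whether two triples are adjacent, is decided by whether the corresponding local vertex or edge is present in the relevant block type; since $A$ and $B$ are of constant size this is a constant-size table lookup whose only relevant inputs are $z_i$ (together with $z_{i\pm1}$ for edges and for vertices sitting on a shared boundary column), so each such output bit depends on $O(1)$ bits of $z$ and is computed by a depth-$2$ subcircuit. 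For the planar embedding, the vertex $(i,j,r)$ is placed at the lattice point $\bigl(i(\ell-1)+j,\, r\bigr)$; since $\ell-1$ is a constant, $i(\ell-1)+j$ is obtained from the index by a bounded number of additions and is thus in $\AC^0$ (and if a flat vertex numbering is wanted, $(i,j,r)\mapsto (i(\ell-1)+j)\,w + r$ is likewise a bounded number of multiplications by constants and additions). Assembling these pieces gives a constant-depth, polynomial-size circuit computing the chosen encoding of $h(z)$ and its embedding from $z$, so $h$ is $\AC^0$-computable.

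The one point to watch — the only reason this needs more than a one-line appeal to Lemma~\ref{lem:comp} — is the direction of the coordinate arithmetic. In Lemma~\ref{lem:comp} each block has width $2$, so the block index is recovered from a column coordinate $a$ by the shift $a \mapsto \lfloor a/2 \rfloor$; here a block has width $\ell$, and recovering the block index from a flat coordinate would require dividing $a$ by a constant, which is \emph{not} in $\AC^0$ once that constant is not a power of two (already testing divisibility by $3$ of a binary string is not in $\AC^0$, as it specializes to $\Mod_3$, which is not in $\AC^0$). The structured indexing above avoids this entirely, since it only ever maps (block index, local coordinates) \emph{to} a lattice point — multiplication by a constant — never the reverse. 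Alternatively one can arrange $\ell$ to be a power of two: replace $A$ and $B$ by $A\cdot A$ and $B\cdot B$ (which keeps $B^{\mathcal{M}} = e$ and, since $p$ is odd, keeps $A^{\mathcal{M}}$ of order $p$, hence still a generator), making the common length even, and then pad both graphs with equally many columns carrying even-length horizontal paths — exactly as in Case~1 of the proof of Lemma~\ref{lemma:SameLen}, which leaves the monoid elements unchanged — up to the next power of two; with $\ell$ a power of two the flat-coordinate version of the argument also goes through. I expect this coordinate subtlety, rather than any combinatorial difficulty, to be the main obstacle.
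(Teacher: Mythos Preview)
Your argument is correct and follows the paper's approach: hardcode the fixed graphs $A$ and $B$ into $C_n$ and emit one block per input bit, relying on the equal lengths from Lemma~\ref{lemma:SameLen} so that the output shape depends only on $n$. The division concern you flag as ``the main obstacle'' is not a genuine one in the circuit model: the map from an output-gate position to its block index (and hence to the single input bit $z_i$ it depends on) is fixed once $n$ is fixed and is part of the \emph{wiring} of $C_n$, computed at construction time rather than by the circuit itself, so no $\AC^0$ division is ever performed and the paper accordingly does not raise the issue; your structured $(i,j,r)$ indexing and your power-of-two padding are both sound, just unnecessary.
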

\begin{proof}
By Lemma \ref{lemma:SameLen}, for every positive integer $n$ there exists grid layered planar graphs $A$ and $B$ such that $A^{\mathcal{M}} = x$, $B^{\mathcal{M}} = e$ and $A$ and $B$ have the same length (say $m$) and same width (say $w$). We assume that $A$ and $B$ are hardcoded into the $\AC^0$ circuit say $C_n$. Now given a string $z \in \{0,1\}^n$, for every bit $0$ or $1$ of $z$, the circuit $C_n$ outputs the corresponding graph $B$ or $A$ respectively in the order of the input bits. Additionally $C_n$ also outputs a copy of the graph $B$ at the beginning. Hence the output graph will have width $w$ and length $m + n(m-1)$. Here we will crucially use the fact that $A$ and $B$ have same the length, since otherwise the length of the output graph would have been variable.
\end{proof}

\begin{lemma}\label{lemma:equiv}
Let $\mathcal{G}$ be the group as obtained in Lemma \ref{lemma:GroupP} and let $e$ be the identity element in $\mathcal{G}$. For all strings $z \in \{0,1\}^*$, $z \in \Mod_p$ if and only if $h(z)^{\mathcal{M}} \neq e$.
\end{lemma}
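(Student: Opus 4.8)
The plan is to evaluate $h(z)^{\mathcal{M}}$ explicitly as an element of the cyclic group $\mathcal{G}$ from Lemma~\ref{lemma:GroupP} and then read off exactly when it equals the identity $e$.

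First I would unfold the recursive definition of $h$: for $z = z_1 z_2 \cdots z_n$ it gives $h(z) = B \cdot C_1 \cdot C_2 \cdots C_n$, where $C_i = A$ when $z_i = 1$ and $C_i = B$ when $z_i = 0$. Applying the homomorphism property $(P \cdot Q)^{\mathcal{M}} = P^{\mathcal{M}} * Q^{\mathcal{M}}$ repeatedly, together with associativity of $*$, turns this into $h(z)^{\mathcal{M}} = B^{\mathcal{M}} * C_1^{\mathcal{M}} * \cdots * C_n^{\mathcal{M}}$. By Lemma~\ref{lemma:SameLen} we have $B^{\mathcal{M}} = e$ and $A^{\mathcal{M}} = x$, where $x$ is a generator of $\mathcal{G}$, so every factor on the right-hand side lies in $\mathcal{G}$.

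Next I would argue that this product collapses to a power of $x$. Since $\mathcal{G}$ is a group contained in $\mathcal{M}$, it is closed under $*$ and $*$ restricted to $\mathcal{G}$ is its group operation; in particular $e$ acts as an identity on all of $\mathcal{G}$ and no product of elements of $\mathcal{G}$ can equal the absorbing element $0$. Hence the leading $e$ and every factor $e$ arising from a $0$ of $z$ simply drop out, leaving $h(z)^{\mathcal{M}} = x^{N_1(z)}$, the $N_1(z)$-th power of $x$ taken in $\mathcal{G}$. Finally, because $x$ generates a cyclic group of order $p$, we have $x^k = e$ if and only if $p \mid k$; therefore $h(z)^{\mathcal{M}} = e$ iff $p \mid N_1(z)$ iff $N_1(z) \equiv 0 \pmod{p}$ iff $z \notin \Mod_p$, which is precisely the claimed equivalence $h(z)^{\mathcal{M}} \neq e \iff z \in \Mod_p$.

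The only step I expect to need genuine care is the collapse argument, because the monoid $\mathcal{M}$ has identity $1$, not $e$, and $*$ sends boundary-mismatched pairs to $0$, so the statements that $e$ behaves here as an identity and that the product never vanishes are not purely formal. They hold because all the factors come from the single subgroup $\mathcal{G}$, whose closure under $*$ already forces the relevant left- and right-boundary vertex sets of $x$ and $e$ to agree and makes $e$ a two-sided identity on $\mathcal{G}$. Everything else — the unfolding of $h$ and the order computation for $x$ — is routine bookkeeping.
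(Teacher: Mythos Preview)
Your proposal is correct and follows essentially the same approach as the paper: unfold $h(z)$ as a concatenation $B\cdot C_1\cdots C_n$, apply the multiplicativity of $(-)^{\mathcal{M}}$, substitute $B^{\mathcal{M}}=e$ and $A^{\mathcal{M}}=x$, and conclude $h(z)^{\mathcal{M}}=x^{N_1(z)}$ so that equality with $e$ holds iff $p\mid N_1(z)$. Your extra remark about why the product stays inside $\mathcal{G}$ (never hitting $0$) and why $e$ rather than the monoid identity $1$ behaves as an identity there is a welcome bit of care that the paper leaves implicit.
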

\begin{proof}
Let $z = z_1z_2 \ldots z_n$ such that $z_i \in \{0,1\}$. Then,
\begin{align*}
h(z)^{\mathcal{M}} & =  h(z_1z_2 \ldots z_n)^{\mathcal{M}}\\
& =  (B \cdot X_1 \cdot X_2 \cdot \ldots \cdot X_n)^{\mathcal{M}} \textrm{, such that } X_i = A \textrm{ if } z_i = 1 \textrm{ and } X_i = B \textrm{ if } z_i = 0\\
& =  B^{\mathcal{M}}* X_1^{\mathcal{M}} * X_2^{\mathcal{M}}* \ldots * X_n^{\mathcal{M}}\\
& = x^t \text{, where $t$ is the number of 1's in $z$}
\end{align*}

Now by Lemma \ref{lemma:GroupP} we have $x^t = e$ if and only if $t \equiv 0 \mod p$. Hence $z \in \Mod_p$ if and only if $h(z)^{\mathcal{M}} \neq e$.
\end{proof}

\begin{lemma}\label{lemma:Hard}
Consider the language $L = \{ h(z) \mid h(z)^{\mathcal{M}} = e$ where $z\in \{0,1\}^{*} \}$. Then $L$ reduces to  $\BGGM$ in $\AC^0$.
\end{lemma}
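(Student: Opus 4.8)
The plan is to give an $\AC^0$ (truth-table) reduction from $L$ to $\BGGM$ by comparing $h(z)$ with a reference graph that is built only from the fixed graph $B$ and that also represents $e$. Recall from the proof of Lemma~\ref{lemma:equiv} that $h(z)^{\mathcal{M}}=x^{t}$, where $t$ is the number of $1$'s in $z$; in particular $h(z)^{\mathcal{M}}$ always lies in the group $\mathcal{G}=\{e,x,\dots,x^{p-1}\}$. First I would record that $A$ and $B$ have the same leftmost column and the same rightmost column, say $X_0$: since $A^{\mathcal{M}},B^{\mathcal{M}}\in\mathcal{G}$, the monoid identities $e*x=x=x*e$ and $e*e=e$ force the corresponding boundary vertex sets to coincide. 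Hence, setting $B_z:=B\cdot B\cdots B$ ($|z|+1$ copies), we get $B_z^{\mathcal{M}}=e^{\,|z|+1}=e$, and both $h(z)$ and $B_z$ have leftmost column $X_0$ and rightmost column $X_0$; consequently $h(z)^{\mathcal{M}}=e=B_z^{\mathcal{M}}$ if and only if the relation components agree, $R_{h(z)}=R_{B_z}$.

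Now I would unfold this equality. The relations live over $2^{X_0}\times 2^{X_0}$, and since $A,B$ are fixed, $|X_0|$ is a constant, so there are only $O(1)$ pairs $(X',Y')$ to check. By the very definition of the monoid element, $(X',Y')\in R_{h(z)}$ holds exactly when $h(z)\setminus(\overline{X'}\cup Y')$ has a perfect matching, i.e.\ when $h(z)\setminus(\overline{X'}\cup Y')\in\BGGM$; likewise for $B_z$. Deleting vertices from the two extreme columns turns a bipartite grid layered planar graph into another one (a vertex-deleted subgraph of such a graph is again one). So the reduction is: from $z$, compute the $O(1)$ pairs of bipartite grid layered planar graphs $h(z)\setminus(\overline{X'}\cup Y')$ and $B_z\setminus(\overline{X'}\cup Y')$, query $\BGGM$ on each, and output $\bigwedge_{(X',Y')}\big([h(z)\setminus(\overline{X'}\cup Y')\in\BGGM]\leftrightarrow[B_z\setminus(\overline{X'}\cup Y')\in\BGGM]\big)$. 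The graph $h(z)$ is $\AC^0$-computable by Lemma~\ref{lemma:AC0Comp}, $B_z$ is $\AC^0$-computable by the same argument (it is a concatenation of copies of a hard-coded graph), deleting a fixed set of boundary vertices is trivially $\AC^0$, and the final combination is a circuit of constant size; hence the whole map is an $\AC^0$ reduction, proving the lemma.

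A natural attempt to turn this into a single-query many-one reduction is to isolate, among the $p$ relations $R_{e},R_{x},\dots,R_{x^{p-1}}$, one pair $(X'_{\ast},Y'_{\ast})$ with $(X'_{\ast},Y'_{\ast})\in R_{x^{j}}\iff j\equiv 0\pmod p$, and then output $h(z)\setminus(\overline{X'_{\ast}}\cup Y'_{\ast})$. Such a pair exists by the standard structure of groups inside the monoid of binary relations on a finite set (here $2^{X_0}$): $R_e$ restricts to the identity relation on some $S\subseteq 2^{X_0}$, and each $R_{x^{j}}$ restricts to $\sigma^{j}$ for one permutation $\sigma$ of $S$ of order $p$; as $x\neq e$ the permutation $\sigma$ is nontrivial and, $p$ being prime, moves some $s_0$ in a cycle of length $p$, so $(s_0,s_0)\in R_{x^{j}}$ iff $p\mid j$, and one may take $X'_{\ast}=Y'_{\ast}=s_0$.

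The one genuinely nontrivial ingredient is this structural fact about groups inside the relation monoid; the truth-table reduction above sidesteps it and is what I would fall back on if that description turns out to be delicate. The remaining points are routine: the small monoid computation showing $A$ and $B$ share their boundary columns (so that $h(z)$, $B_z$ and the deleted graphs are honestly comparable as monoid elements), the subgraph observation keeping everything inside the bipartite grid layered planar class, and the constant-size bookkeeping. Finally, composing this reduction with $h$ and using Lemma~\ref{lemma:equiv} reduces $\overline{\Mod_p}$ to $\BGGM$, which with Theorem~\ref{thm:razsmo} (and Theorem~\ref{thm:final1} for odd $p$) gives $\BGGM\notin\AC^0[2^{\alpha}]$.
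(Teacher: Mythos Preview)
Your argument is correct and matches the paper's approach: both test $(X',Y')\in R_{h(z)}$ for each of the $O(1)$ pairs by a $\BGGM$ query on $h(z)$ with the boundary columns appropriately modified, and then compare the resulting bit-vector with the (constant) relation of $e$. The only cosmetic differences are that the paper hardcodes $e=(X,Y,R)$ directly rather than recomputing it via a reference graph $B_z$ (so your queries on $B_z$ are redundant but harmless), and realizes the boundary condition by attaching pendant edges to the vertices of $\overline{X'}\cup Y'$ instead of deleting them---equivalent, since a degree-one pendant forces its neighbour out of the remaining matching.
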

\begin{proof}
We know that $h(z)$ is a grid layered planar graph having fixed width, say $w$. Let $e=(X,Y,R)$ and $h(z)^{\mathcal{M}}=(X_0,Y_0,R_0)$. Assume $e= (X, Y, R)$ are hardcoded in the circuit. We can easily check if $X_0 = X$ and $Y_0 = Y$ using an $\AC^0$ circuit. 

For checking whether $R_0=R$, we will create $2^w$ instances of $\BGGM$ and infer $R_0$ from their output.
For each $X' \subseteq X$ and $Y' \subseteq Y$, we create a graph $G_{X'Y'}$. $G_{X'Y'}$ is the graph $h(z)$ with some additional vertices and edges. For each $v \in \overline{X'}$ we add a vertex $l_v$ to the left of $v$ and the edge $\{v,l_v\}$. Similarly, For each $v \in Y'$ we add a vertex $r_v$ to the right of $v$ and the edge $\{v,r_v\}$. 
Note that $(X',Y') \in R_0$ if and only if $G_{X'Y'}$ have a perfect matching. Thus, we can compute $R_0$. Now $R_0$ if and only if for all $X' \subseteq X$ and $Y' \subseteq Y$ we have $G_{X'Y'}$ contains a perfect matching. So our output graph is essentially a union of all such graphs $G_{X'Y'}$. Again this can be constructed easily in $\AC^0$. 
\end{proof}

\begin{theorem} \label{thm:ModPred}
$\overline{\Mod_p}$ reduces to $\BGGM$ in $\AC^0$. 
\end{theorem}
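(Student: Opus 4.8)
The plan is to obtain the theorem as a straightforward composition of Lemmas \ref{lemma:equiv}, \ref{lemma:AC0Comp}, and \ref{lemma:Hard}. First I would record the logical equivalence supplied by Lemma \ref{lemma:equiv}: for every $z \in \{0,1\}^*$ we have $z \in \overline{\Mod_p}$ precisely when $h(z)^{\mathcal{M}} = e$, and the latter is by definition the same as $h(z) \in L$, where $L = \{h(z) \mid h(z)^{\mathcal{M}} = e\}$ is the language of Lemma \ref{lemma:Hard}. So it suffices to compose the map $z \mapsto h(z)$ with the $\AC^0$ reduction of $L$ to $\BGGM$.

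Second, I would assemble the circuit family. By Lemma \ref{lemma:AC0Comp} there is an $\AC^0$ circuit that, on input $z \in \{0,1\}^n$, outputs the grid layered planar graph $h(z)$ together with its embedding; because $A$ and $B$ have equal length $m$, the output graph has the fixed shape of width $w$ and length $m + n(m-1)$ regardless of the bits of $z$. On top of this I would place the $\AC^0$ reduction of Lemma \ref{lemma:Hard}, which maps a graph of the form $h(z)$ to the disjoint union $\bigcup_{X' \subseteq X,\, Y' \subseteq Y} G_{X'Y'}$ of the $2^w$ augmented graphs; this union is a single bipartite grid layered planar instance that has a perfect matching if and only if $h(z)^{\mathcal{M}} = e$. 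Wiring the outputs of the first circuit into the inputs of the second yields a single $\AC^0$ circuit family computing a function $z \mapsto \Phi(z)$, since the composition of a constant number of $\AC^0$-computable functions is $\AC^0$-computable.

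Finally I would check correctness end to end: $z \in \overline{\Mod_p}$ $\Longleftrightarrow$ $h(z)^{\mathcal{M}} = e$ (Lemma \ref{lemma:equiv}) $\Longleftrightarrow$ $h(z) \in L$ $\Longleftrightarrow$ $\Phi(z) \in \BGGM$ (Lemma \ref{lemma:Hard}), so $\Phi$ is an $\AC^0$ many-one reduction from $\overline{\Mod_p}$ to $\BGGM$. I do not expect a genuine obstacle; the only point needing care is that the intermediate object $h(z)$ has size polynomial in $n$ and a layout depending only on $n$, so the second circuit can be laid out statically and attached to the first without blowing up the depth to non-constant --- which is exactly the purpose served by the equal-length guarantee of Lemma \ref{lemma:SameLen}. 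All the substantive content is already carried by the earlier lemmas, so this theorem is essentially their corollary.
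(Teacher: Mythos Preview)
Your proposal is correct and follows essentially the same approach as the paper: compute $h(z)$ via Lemma~\ref{lemma:AC0Comp}, feed it into the reduction of Lemma~\ref{lemma:Hard}, and invoke Lemma~\ref{lemma:equiv} for correctness. The paper's own proof is a terse two-sentence version of exactly this composition, so your write-up simply makes explicit the details (fixed output shape, static layout of the second circuit) that the paper leaves implicit.
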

\begin{proof}
Given $z \in \{0,1\}^n$ we first compute $h(z)$ using Lemma \ref{lemma:AC0Comp} and then output the grid layered planar graph (say $G_z$) as obtained by the reduction of Lemma \ref{lemma:Hard}. By Lemma \ref{lemma:equiv} and \ref{lemma:Hard} it follows that $z \notin \Mod_p$ if and only $G_z$ has a perfect matching.
\end{proof}

\begin{theorem} \label{thm:final2}
$\BGGM$ is not in $\AC^0[2^\alpha]$ for $\alpha \in \mathbb{N}$.
\end{theorem}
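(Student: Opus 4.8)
The plan is to assemble the final statement from Theorem \ref{thm:ModPred}, the Razborov--Smolensky lower bound (Theorem \ref{thm:razsmo}), and the standard closure properties of $\AC^0[2^\alpha]$. Recall from Lemma \ref{lemma:GroupP} that the monoid $\mathcal{M}$ of Section \ref{sec:monoid} contains a cyclic group of order $p$ for some \emph{odd} prime $p$; fix this $p$ once and for all. Theorem \ref{thm:ModPred} then hands us an $\AC^0$ many-one reduction from $\overline{\Mod_p}$ to $\BGGM$, and the rest is a short contradiction argument.

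First I would record the two closure facts. The class $\AC^0[2^\alpha]$ is closed under complementation, simply by negating the output gate. It is also closed under $\AC^0$ many-one reductions: composing the reduction circuit family with an $\AC^0[2^\alpha]$ family for $\BGGM$ yields a family of constant depth and polynomial size using only $\AND$, $\OR$, $\NOT$ and $\Mod_{2^\alpha}$ gates. Consequently, if $\BGGM \in \AC^0[2^\alpha]$, then $\overline{\Mod_p} \in \AC^0[2^\alpha]$ via the reduction of Theorem \ref{thm:ModPred}, and hence $\Mod_p \in \AC^0[2^\alpha]$ by complementation.

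Next I would invoke Theorem \ref{thm:razsmo} with the pair of distinct primes $2$ and $p$ --- here we crucially use that $p$ is odd, so $p \neq 2$ --- which states $\Mod_p \notin \AC^0[2^\alpha]$. This contradicts the previous paragraph, so we conclude $\BGGM \notin \AC^0[2^\alpha]$ for every $\alpha \in \mathbb{N}$, completing the proof (and, together with Theorem \ref{thm:final1}, the full claim that $\BGGM \notin \AC^0[p^\alpha]$ for all primes $p$).

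I do not expect any genuine obstacle in this last assembly; the substantive work lives upstream in Section \ref{sec:bggmlowerbound}. The one point I would be careful about is that the reduction of Theorem \ref{thm:ModPred} really is many-one and $\AC^0$-computable, so that the closure argument applies verbatim: $h(z)$ is $\AC^0$-computable by Lemma \ref{lemma:AC0Comp} (using the matching-length trick of Lemma \ref{lemma:SameLen}), and the step of Lemma \ref{lemma:Hard} only forms a fixed-size ($2^w$ with $w$ a constant) union of $\AC^0$-constructed graphs, so the composed object is a single $\BGGM$ instance produced by a constant-depth polynomial-size circuit. Given that, the proof of Theorem \ref{thm:final2} is immediate from the cited results.
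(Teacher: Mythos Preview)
Your proposal is correct and follows essentially the same argument as the paper: assume $\BGGM \in \AC^0[2^\alpha]$, compose with the $\AC^0$ reduction of Theorem \ref{thm:ModPred} and negate the output to place $\Mod_p$ in $\AC^0[2^\alpha]$, and contradict Theorem \ref{thm:razsmo} using that $p$ is odd. The paper's proof is the same one-line contradiction; your additional remarks on closure and on why the reduction is genuinely many-one $\AC^0$ are helpful elaborations but not a different route.
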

\begin{proof}
If {\BGGM} is in $\AC^0[2^\alpha]$ then by combining this circuit with the reduction in Theorem \ref{thm:ModPred}  and appending a $\NOT$ gate at the top we would get an $\AC^0[2^\alpha]$ circuit for $\Mod_p$. This would contradict Razborov and Smolensky's result (Theorem \ref{thm:razsmo}) since $p$ is an odd prime.
\end{proof}

%
Finally combining Theorems \ref{thm:final1} and \ref{thm:final2} we get the following theorem.

\begin{theorem}
$\BGGM$ is not in $\AC^0[p^\alpha]$ for every prime $p$ and $\alpha \in \mathbb{N}$. 
\end{theorem}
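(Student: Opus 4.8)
The plan is to simply combine the two lower bounds already established in the paper. Theorem~\ref{thm:final1} shows that $\BGGM \notin \AC^0[p^\alpha]$ for every \emph{odd} prime $p$ and every $\alpha \in \mathbb{N}$, while Theorem~\ref{thm:final2} handles the remaining case $p = 2$, showing $\BGGM \notin \AC^0[2^\alpha]$ for every $\alpha \in \mathbb{N}$. Since every prime $p$ is either odd or equal to $2$, these two statements together cover all primes $p$ and all exponents $\alpha$, which is exactly the claim.

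Concretely, I would argue as follows. Fix a prime $p$ and $\alpha \in \mathbb{N}$. If $p$ is odd, apply Theorem~\ref{thm:final1} directly to conclude $\BGGM \notin \AC^0[p^\alpha]$. If $p = 2$, apply Theorem~\ref{thm:final2} to conclude $\BGGM \notin \AC^0[2^\alpha]$. In both cases we obtain $\BGGM \notin \AC^0[p^\alpha]$, so the statement holds for every prime $p$ and every $\alpha \in \mathbb{N}$.

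There is essentially no obstacle here, since the statement is a pure case split over the two prior theorems; the real content lives in Theorem~\ref{thm:final1} (via the $\AC^0$ reduction from $\Parity$ in Theorem~\ref{thm:red} together with Razborov--Smolensky, Theorem~\ref{thm:razsmo}) and in Theorem~\ref{thm:final2} (via the monoid-word reduction of Hansen et al., the existence of an odd-order cyclic group in $\mathcal{M}$ from Lemma~\ref{lemma:GroupP}, and the $\AC^0$ reduction from $\overline{\Mod_p}$ to $\BGGM$ in Theorem~\ref{thm:ModPred}). If anything warrants a remark, it is that one cannot hope to merge the two cases into a single uniform argument, because the algebraic characterization of $\AC^0[m]$ for composite non-prime-power $m$ is unavailable; this is also why the theorem is stated only for prime powers $p^\alpha$ rather than for arbitrary moduli.

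\begin{proof}
Fix a prime $p$ and $\alpha \in \mathbb{N}$. If $p$ is odd, then Theorem~\ref{thm:final1} gives $\BGGM \notin \AC^0[p^\alpha]$. If $p = 2$, then Theorem~\ref{thm:final2} gives $\BGGM \notin \AC^0[2^\alpha]$. Since every prime is either odd or equal to $2$, in all cases $\BGGM \notin \AC^0[p^\alpha]$.
\end{proof}
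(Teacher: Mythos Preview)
Your proposal is correct and matches the paper's own argument exactly: the paper simply states that the theorem follows by combining Theorems~\ref{thm:final1} and~\ref{thm:final2}, which is precisely the odd-prime versus $p=2$ case split you wrote out.
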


\section{Application of our results}
\label{sec:appl}

\subsection{Circuit Lower Bounds for Series-Parallel Graphs}
For series-parallel graphs, it is known that bipartite matching is in $\NC^1$ given transitive closure of tree decomposition as input as well \cite{Tantau-LogSpace}. We show a better lower bound for this problem using our reduction of $\Parity$ to $\BGGM$.

In reduction from $\Parity$ to $\BGGM$ the final graph we get is also a series-parallel graph (each connected component is a path or a single edge). Now the challenge is to construct the transitive closure of the tree decomposition(more precisely term representation of tree decomposition) for it using $\AC^0$ circuits. For this, we will use the following theorem mentioned in Hansen et al. \cite{Hansen2014}.

\begin{theorem}\label{thm:CutwidthToTreewidth}\cite{Hansen2014} 
Given as input a linear arrangement of bounded cutwidth $k$ for some graph $G$, a tree decomposition of width $k$ for graph $G$ in term representation can be constructed by an $\AC^0$ circuit.
\end{theorem}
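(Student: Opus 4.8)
The plan is to realize the classical reduction from cutwidth to pathwidth and then to repackage the resulting path decomposition as a linear term over the finite algebra of bounded-boundary graphs used in the logspace and $\NC^1$ versions of Bodlaender's and Courcelle's theorems \cite{Tantau-LogSpace,Tantau-NC1}, checking at every stage that the work done is local and therefore $\AC^0$-computable. Write the input linear arrangement as $v_1, v_2, \ldots, v_n$. For each cut position $i$ let $S_i = \{\, v_a : a \le i \text{ and } \{v_a, v_b\} \in E \text{ for some } b > i \,\}$ be the set of left endpoints of the edges crossing the cut after $v_i$; since the arrangement has cutwidth $k$, at most $k$ edges cross each cut, so $|S_i| \le k$. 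Taking the bags $B_i = S_i \cup \{v_i\}$ gives a path decomposition, hence a tree decomposition: every vertex lies in the bag at its own position; every edge $\{v_a,v_b\}$ with $a<b$ is covered by $B_b$ (because $v_a \in S_b \subseteq B_b$ and $v_b \in B_b$); and each $v_a$ occupies exactly the contiguous block of bags from its own position to the position of its rightmost neighbour. As $|B_i| \le k+1$, the width is at most $k$. The key point for the complexity claim is that the predicate ``$v_a \in B_i$'' is $\AC^0$-computable: it is the disjunction, over the polynomially many candidate neighbours $b>i$, of the edge-indicator bit for $\{v_a,v_b\}$, which is an unbounded fan-in $\OR$ of depth one.

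Next I would turn the sequence of bags into the term. Because the decomposition is a path, the term is a caterpillar built from the unary operations of the boundaried-graph algebra: introduce a boundary vertex, add an edge between two boundary vertices, forget a boundary vertex, and (as explained below) permute boundary labels. Passing from $B_i$ to $B_{i+1}$ requires only a constant number of such operations, namely introducing the vertices of $B_{i+1}\setminus B_i$, adding every edge incident to $v_{i+1}$ whose other endpoint already lies in the current bag, and forgetting the vertices of $B_i \setminus B_{i+1}$; which operations to emit is decided purely by comparing $B_i$ with $B_{i+1}$ and reading off the local edges, all of which are $\AC^0$-computable from the previous step. Concatenating these constant-length blocks over $i = 1, \ldots, n$ yields the full linear term, and since the block emitted at position $i$ depends only on $\AC^0$-computable data local to $i$, a single $\AC^0$ circuit with one output block per position produces the entire term description.

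The step I expect to be the genuine obstacle is maintaining a \emph{consistent} assignment of the bounded boundary labels $\{0,1,\ldots,k\}$ to the at most $k+1$ live vertices, since the operations of the algebra refer to vertices by these labels rather than by their global index. The naive choice---label each live vertex by its rank among the live vertices---is delicate precisely because ranks seem to require counting, and counting is the one thing $\AC^0$ cannot do. The way around this is that the relevant counts never exceed the constant $k$: the rank of $v_a$ among the live vertices at cut $i$ equals the number of live vertices of smaller index, a quantity bounded by $k$, and ``exactly $c$ of these $n$ bits are set'' for a constant $c \le k$ is an $\AC^0$ predicate (the exact-threshold function with a constant threshold is a polynomial-size depth-two formula). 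Hence the rank label of every live vertex is $\AC^0$-computable. Forgetting a low-ranked vertex shifts the ranks of the higher ones, so between consecutive cuts I would emit an explicit boundary-permutation operation; the required permutation is determined by the $\AC^0$-computable rank changes and, since it moves only a constant number of labels, it is one of finitely many symbols, keeping the algebra finite. With the labelling handled this way every per-position computation is local and bounded, so the whole term representation is produced in $\AC^0$, which is the claim.
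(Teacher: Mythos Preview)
The paper does not prove this theorem: it is quoted verbatim as a result of Hansen et al.\ \cite{Hansen2014} and used as a black box, so there is no in-paper argument to compare your proposal against. Your outline (cutwidth $\to$ path decomposition $\to$ linear term over the finite boundaried-graph algebra, with all per-position work bounded and hence $\AC^0$) is the natural route and is almost certainly what the cited reference does.

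One technical slip worth fixing before you commit the argument. With your definitions
\[
S_i=\{\,v_a : a\le i \text{ and } \{v_a,v_b\}\in E \text{ for some } b>i\,\},\qquad B_i=S_i\cup\{v_i\},
\]
the edge-covering claim fails: for an edge $\{v_a,v_b\}$ with $a<b$ you assert $v_a\in S_b$, but membership in $S_b$ requires an edge from $v_a$ to some vertex \emph{strictly beyond} $b$, which $\{v_a,v_b\}$ alone does not provide, and no other bag contains both endpoints either. The usual repair is to take
\[
B_i=\{v_i\}\cup\{\,v_a : a<i \text{ and } \{v_a,v_b\}\in E \text{ for some } b\ge i\,\},
\]
after which $\{v_a,v_b\}\subseteq B_b$ as you intended; the width bound and contiguity of occurrences still hold, and nothing in your $\AC^0$ analysis changes. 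The rest of your sketch, in particular the observation that rank-among-live-vertices is a threshold with constant bound $k$ and hence $\AC^0$, is the right way to handle the boundary-labelling issue.
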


\begin{lemma} \label{lem:CutwidthConstruction}
For every $x \in \{0,1\}^*$, a linear arrangement of bounded cutwidth for $G_x$ can be created by an $\AC^0$ circuit.
\end{lemma}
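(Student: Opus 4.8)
The plan is to take the graph $G_x$ constructed in Section~\ref{sec:bggm2parity} and exhibit a linear arrangement of its vertices with bounded cutwidth, from which Theorem~\ref{thm:CutwidthToTreewidth} then yields the term representation of a bounded-width tree decomposition. The crucial structural fact is already established: $G_x$ is a grid layered planar graph of width $6$ (six rows, indexed $0,\ldots,5$) and length $2k$ where $k$ is the number of constituent pairs of $f(x)$. A grid layered planar graph of bounded width has bounded cutwidth more or less by inspection, so the content of the lemma is really just making this explicit and checking that the arrangement is $\AC^0$-computable. First I would fix the natural column-major order: list the vertices column by column from left to right, and within each column from bottom row to top row (only those lattice points $(a,b)$ that are actually vertices of $G_x$ are listed, but since the blocks $G_{00},G_{01},G_{10}$ each have a fixed vertex set, membership of $(a,b)$ depends only on $a \bmod 2$ and on the constituent pair containing column $a$, hence on at most two bits of $f(x)$).

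Second I would bound the cutwidth of this arrangement. Any vertical cut in the linear order either (i) falls strictly between two consecutive vertices of the same column $a$, or (ii) falls between column $a$ and column $a+1$. In case (ii), the edges crossing the cut are exactly the edges of $G_x$ going from column $a$ to column $a+1$; since each column has at most $6$ vertices and the graph is planar with each such edge of the form $(a,b)$--$(a+1,b)$ or $(a,b)$--$(a+1,b\pm1)$ as dictated by the $\odot$ rules, there are at most a constant number (at most $6$) of them. In case (i), the crossing edges are the at most $6$ edges from columns $\le a$ to columns $\ge a+1$ (again at most one per row) together with the at most one vertical edge inside column $a$ that straddles the cut point; in any block only rows $2$--$3$ carry a vertical intra-column edge, so this adds at most $1$. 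Hence the cutwidth of the arrangement is bounded by an absolute constant (something like $7$), independent of $x$. I would also note there are no edge crossings in this arrangement, so it is in fact a \emph{planar} linear arrangement of bounded width, consistent with $G_x$ having bounded planar cutwidth.

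Third I would verify $\AC^0$-computability of the arrangement. Just as in the proof of Lemma~\ref{lem:comp}, the relevant predicates---``is $(a,b)$ a vertex of $G_x$'', ``does $(a,b)$ precede $(c,d)$ in the linear order'', ``is there an edge between $(a,b)$ and $(c,d)$''---each depend on only a constant number of bits of $f(x)$, which in turn is an $\AC^0$-computable (indeed projection-like) function of $x$. Thus the whole arrangement, presented e.g. as the adjacency relation together with the linear order on the vertex set, is computable by a constant-depth polynomial-size circuit on input $x$. Composing this with the $\AC^0$ circuit of Theorem~\ref{thm:CutwidthToTreewidth} then produces, in $\AC^0$, a term representation of a bounded-width tree decomposition of $G_x$.

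I do not expect a genuine obstacle here: the lemma is essentially a bookkeeping exercise, and the only thing to be careful about is the convention (already adopted in Section~\ref{sec:monoid}) that the leftmost and rightmost columns of a grid layered planar graph carry no vertical edges, plus making sure the constant bound on cutwidth accounts for the one vertical edge a cut of type (i) can hit. The mild subtlety worth spelling out is that ``cutwidth'' for Theorem~\ref{thm:CutwidthToTreewidth} is over an arbitrary linear arrangement (not necessarily planar), so it suffices to exhibit \emph{any} bounded-cutwidth arrangement; the column-major order above does the job, and its width is bounded by a fixed constant determined solely by the six-row structure of the blocks.
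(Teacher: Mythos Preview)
Your approach is correct and essentially the same as the paper's: both linearize $G_x$ block by block with a fixed hardcoded ordering of the (constantly many) vertices in each block, then argue $\AC^0$-computability from the fact that vertex and edge predicates depend on a constant number of bits of $f(x)$. The paper simply draws the per-block linearization of $G_{00}$ in a figure and asserts the analogous orderings for $G_{01},G_{10}$; your column-major order is one concrete such choice, and you supply more of the cutwidth justification than the paper does.

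One small bookkeeping slip in your case~(i): for a cut strictly inside column~$a$, the crossing edges are not only those ``from columns $\le a$ to columns $\ge a{+}1$'' (i.e.\ the $a$-to-$(a{+}1)$ edges whose $a$-endpoint lies before the cut) but also the $(a{-}1)$-to-$a$ edges whose $a$-endpoint lies after the cut; and in $G_{10}$ the left column carries \emph{two} vertical edges (rows $0$--$1$ and $4$--$5$), not just rows $2$--$3$. None of this changes the conclusion, since each of these three contributions is at most~$2$, so the cutwidth is still bounded by an absolute constant.
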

\begin{proof}
We linearize $G_{00}$, $G_{01}$ and $G_{10}$. For $G_{00}$ it is shown in figure \ref{fig:BlockG00SP}. Same can be done for $G_{01}$ and $G_{10}$. This order can be hardcoded in $\AC^0$ circuit. We keep edges same. Clearly to check edge between $v_i$ and $v_j$, we will not need more than $4$ bits. It can be shown that such a linear arrangement can be computed in ${\AC}^0$ using arguments similar to those given in Section \ref{sec:bggm2parity}.
\end{proof}
\begin{figure}[!htb]
\captionsetup[figure]{justification=centering}
    \centering
\begin{subfigure}[c]{.4\textwidth}
\begin{center}
\begin{tikzpicture}[scale=0.60]
\foreach \x in {0,1}
\foreach \y in {0,1,2,3,4,5}
    \filldraw[fill=white!10!white, draw=white] (\x cm,\y cm)  circle (0.05cm);

\foreach \x in {0,1}
\foreach \y in {0,1,2,3,4,5}
    \filldraw[fill=red!10!white, draw=blue] (\x cm,\y cm)  circle (0.05cm);

\filldraw[fill=black!90!white, draw=black] (0cm,0cm) circle (0.1cm) node[anchor=east] {A};
\filldraw[fill=black!90!white, draw=black] (1cm,0cm) circle (0.1cm) node[anchor=west] {B};

\filldraw[fill=black!90!white, draw=black] (0cm,5cm) circle (0.1cm) node[anchor=east] {C};
\filldraw[fill=black!90!white, draw=black] (1cm,5cm) circle (0.1cm) node[anchor=west] {D};

\filldraw[fill=black!90!white, draw=black] (0cm,2cm) circle (0.1cm) node[anchor=east] {E};
\filldraw[fill=black!90!white, draw=black] (0cm,3cm) circle (0.1cm) node[anchor=east] {F};
\draw [black,thick] (0,2) -- (0,3);
\draw [black,thick] (0,0) -- (1,0);
\draw [black,thick] (0,5) -- (1,5);
\end{tikzpicture}
\end{center}
\caption{$G_{00}$}
\label{fig:G00SP}
\end{subfigure}
\begin{subfigure}[c]{.4\textwidth}
\centering
\begin{center}
\begin{tikzpicture}[scale=0.60]
\foreach \x in {0,1,2,3,4,5}
\foreach \y in {0,1,2,3,4,5}
    \filldraw[fill=white!10!white, draw=white] (\x cm,\y cm)  circle (0.05cm);

\foreach \x in {0,1,2,3,4,5}
\foreach \y in {3}
    \filldraw[fill=red!10!white, draw=blue] (\x cm,\y cm)  circle (0.05cm);

\filldraw[fill=black!90!white, draw=black] (0cm,3cm) circle (0.1cm) node[anchor=east] {A};
\filldraw[fill=black!90!white, draw=black] (4cm,3cm) circle (0.1cm) node[anchor=west] {B};

\filldraw[fill=black!90!white, draw=black] (1cm,3cm) circle (0.1cm) node[anchor=east] {C};
\filldraw[fill=black!90!white, draw=black] (5cm,3cm) circle (0.1cm) node[anchor=west] {D};

\filldraw[fill=black!90!white, draw=black] (2cm,3cm) circle (0.1cm) node[anchor=east] {E};
\filldraw[fill=black!90!white, draw=black] (3cm,3cm) circle (0.1cm) node[anchor=west] {F};
\draw [black,thick] (0,3) to [bend right] (4,3);
\draw [black,thick] (1,3) to [bend left] (5,3);
\draw [black,thick] (2,3) -- (3,3);
\end{tikzpicture}
\end{center}
\caption{${G^{\prime}}_{00}$}
\label{fig:G00LinearSP}
\end{subfigure}

\caption{Linearizing $G_{00}$}
\label{fig:BlockG00SP}
\end{figure}

\begin{theorem} \label{thm:SPred}
$\Parity$ reduces to series-parallel graph matching with given tree decomposition in term representation.
\end{theorem}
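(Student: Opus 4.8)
The plan is to reuse the graph $G_x$ constructed in Section~\ref{sec:bggm2parity} essentially unchanged, and to equip it, in $\AC^0$, with a term representation of a bounded-width tree decomposition; the reduction on input $x$ then outputs the pair consisting of (the linearized copy of) $G_x$ together with this term. Correctness will follow immediately from two facts already established about $G_x$: by Lemma~\ref{lem:bggm} every connected component of $G_x$ is a path or a single edge, so $G_x$ is a forest and in particular a series-parallel (and bipartite) graph; and by Lemma~\ref{lem:equiv} the graph $G_x$ has a perfect matching exactly when $x$ has even parity. The only new work is to produce the auxiliary tree decomposition in the required input format without leaving $\AC^0$.

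To do this I would compose three $\AC^0$-computable maps. First, by Lemma~\ref{lem:comp}, $G_x$ together with its embedding is $\AC^0$-computable from $x$, and the same locality argument lays $G_x$ out as a bounded-cutwidth linear arrangement; this is precisely the content of Lemma~\ref{lem:CutwidthConstruction}. Second, feeding that linear arrangement into Theorem~\ref{thm:CutwidthToTreewidth} yields a width-$k$ tree decomposition of $G_x$ in term representation, again by a constant-depth circuit (here $k$ is the fixed cutwidth bound; that this width is not optimal for $G_x$ is immaterial, since $k$ is constant). Third, because $\AC^0$ many-one reductions compose, the overall map $x \mapsto (G_x, T_x)$, with $T_x$ the term just produced, is $\AC^0$-computable, and by Lemma~\ref{lem:equiv} its image is a yes-instance of bipartite series-parallel matching with tree decomposition given as a term precisely when $x$ has even parity. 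Composing once more with the trivial $\AC^0$ reduction of $\Parity$ to its complement (flip a single input bit) yields an $\AC^0$ reduction from $\Parity$ itself, exactly as in Theorem~\ref{thm:red}.

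I expect the main obstacle to lie entirely inside Lemma~\ref{lem:CutwidthConstruction} and the invocation of Theorem~\ref{thm:CutwidthToTreewidth}: one must verify that the linear arrangement of $G_x$ can be described block by block, so that whether $\{v_i,v_j\}$ is an edge depends on only a constant number of bits of $x$, and that the term produced from it by the construction of \cite{Hansen2014} inherits this locality. Once the bounded-cutwidth layout is available, the passage to a term representation is used as a black box, and the remaining steps---checking that $G_x$ is series-parallel, the matching/parity equivalence, and compositionality of $\AC^0$ reductions---are routine given Section~\ref{sec:bggm2parity}. I would also note that the statement only asks for the term representation (as consumed by the $\NC^1$ algorithm of \cite{Tantau-NC1}); if one instead wanted the transitive closure of the tree decomposition as in \cite{Tantau-LogSpace}, a single additional $\AC^0$ step computing reachability inside the explicitly given constant-width decomposition tree would suffice.
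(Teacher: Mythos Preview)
Your proposal is correct and follows exactly the paper's approach: the paper's proof is the one-liner ``Follows from Theorem~\ref{thm:CutwidthToTreewidth} and Lemma~\ref{lem:CutwidthConstruction},'' which is precisely the composition you spell out (build $G_x$, linearize it in $\AC^0$ via Lemma~\ref{lem:CutwidthConstruction}, then convert to a term representation via Theorem~\ref{thm:CutwidthToTreewidth}, with series-parallelness and the parity equivalence coming from Lemmas~\ref{lem:bggm} and~\ref{lem:equiv}). Your extra remark about flipping one input bit to align the direction of the reduction is a fair clarification that the paper leaves implicit.
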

\begin{proof}
Follows from Theorem \ref{thm:CutwidthToTreewidth} and Lemma \ref{lem:CutwidthConstruction}.
\end{proof}

\begin{theorem} \label{thm:SPLowerbound}
Series-parallel graph matching with given tree decomposition in term representation is not in $\AC^0[p^\alpha]$ where $p$ is odd prime and $\alpha \in \mathbb{N}$.
\end{theorem}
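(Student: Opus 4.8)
The plan is to derive the result from Theorem \ref{thm:SPred} by composing reductions, exactly as Theorems \ref{thm:final1} and \ref{thm:final2} were derived for \BGGM{}. Assume towards a contradiction that series-parallel graph matching, given the term representation of a tree decomposition, is decided by an $\AC^0[p^\alpha]$ circuit family $\langle C_m\rangle$ for some fixed odd prime $p$ and $\alpha\in\mathbb{N}$.

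By Theorem \ref{thm:SPred} there is an $\AC^0$ many-one reduction $g$ from \Parity{} to this problem; concretely, on an input $x$ of length $n$, the reducing circuit first builds $G_x$ together with a bounded-cutwidth linear arrangement of it (Lemma \ref{lem:CutwidthConstruction}) and then converts this arrangement into a width-bounded tree decomposition of $G_x$ in term representation (Theorem \ref{thm:CutwidthToTreewidth}), so that $x\in\Parity$ iff $g(x)$ is a yes-instance of series-parallel matching. Crucially the output $g(x)$ has size $m=m(n)$ depending only on $n$ — the number of blocks of $G_x$, and hence its width and length and the size of the resulting term, are functions of $n$ alone — so we may feed the output of the $\AC^0$ circuit for $g$ on length-$n$ inputs directly into $C_{m(n)}$. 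The composite family has constant depth (the sum of the two constant depths), polynomial size, and uses only $\AND,\OR,\NOT$ and $\Mod_{p^\alpha}$ gates; thus it is an $\AC^0[p^\alpha]$ family deciding \Parity{}. (This is just the standard closure of $\AC^0[p^\alpha]$ under $\AC^0$ reductions: substituting $\AC^0$ subcircuits for the inputs of an $\AC^0[p^\alpha]$ circuit preserves constant depth, polynomial size and the gate basis.)

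This contradicts Theorem \ref{thm:razsmo} applied with $q=2$: since $\Parity=\Mod_2$ and $2\neq p$, we have $\Parity\notin\AC^0[p^\alpha]$. Hence no such $\langle C_m\rangle$ exists, proving the theorem. I do not expect any genuine obstacle here: all the substantive content — producing the grid-layered planar structure, the bounded-cutwidth linear arrangement, and the term representation of the tree decomposition, all in $\AC^0$ — is already packaged into Theorem \ref{thm:SPred} through Lemma \ref{lem:CutwidthConstruction} and Theorem \ref{thm:CutwidthToTreewidth}; the one point to check is that the reduction's output length is a function of the input length alone, which is exactly what lets the composition remain a legal uniform $\AC^0[p^\alpha]$ circuit family. (If one additionally wanted the $p=2$ case one would have to mimic Section \ref{sec:bggmlowerbound}, reducing $\overline{\Mod_q}$ for an odd prime $q$ via a monoid-word construction tailored to series-parallel graphs; but the statement as given only concerns odd $p$, so the composition above suffices.)
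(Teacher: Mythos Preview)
Your proposal is correct and follows essentially the same approach as the paper: the paper's proof is simply ``Follows from Theorems \ref{thm:SPred} and \ref{thm:razsmo},'' and you have expanded exactly that composition-of-reductions argument (with the extra care about output length being a function of input length) into a fully written contradiction argument.
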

\begin{proof}
Follows from Theorems \ref{thm:SPred} and \ref{thm:razsmo}.
\end{proof}

\subsection{Future Work}
For $m\in \mathbb{N}$ and not a power of prime we do not know whether $\AC^0[m] = \NP$. To extend our results for all $m\in \mathbb{N}$ or to show that {\BGGM} lies in $\AC^0[m]$ for some $m \in \mathbb{N}$, algebraic characterization of these classes are needed. Algebraic theory of subclasses of ${\NC}^1$ developed by \cite{Barrington-Therien} does not provide any such characterization. This is the biggest hurdle in extending our approach to $\AC^0[m]$.

Our result improves the lower bound for perfect matching in series-parallel graphs but the bound are not tight. For example, we do not know if perfect matching in series-parallel graphs is in $\AC^0[2]$ or $\ACC^0$.

\section*{Acknowledgment}

The authors would like to thank Sarthak Garg for several sessions of helpful discussions. The second author would like to acknowledge Navid Talebanfard for introducing him to the problem. He would also like to thank Jayalal Sarma for inviting him to the PhD defence of Balagopal Komarath where he gained more insight into the area in general. 



\bibliography{references}


\end{document}